\newtheorem{lemma}{Lemma}
\newtheorem{theorem}{Theorem}
\newtheorem{prop}{Proposition}
\begin{document}

%\begin{frontmatter}
\textbf{
\large {\centerline{
A system architecture for parallel analysis of flux-balanced }}}
\large{\textbf{\centerline{metabolic pathways}}}
%Elementary Flux Mode Analysis: A Graph-Based Approach
%\tnotetext[mytitlenote]{Fully documented templates are available in the elsarticle package on \href{http://www.ctan.org/tex-archive/macros/latex/contrib/elsarticle}{CTAN}.}
\vspace{1em}

%% Group authors per affiliation:
\small{
\centerline{Mona Arabzadeh$^1$, Mehdi Sedighi$^1$, Morteza Saheb Zamani$^1$,}
\centerline{and Sayed-Amir Marashi$^2$}}
\vspace{1em}
\footnotesize{
\centerline{$^1$Department of Computer Engineering and Information Technology,}
\centerline{Amirkabir University of Technology, Tehran, Iran and}
\vspace{0.2em}
\centerline{$^2$Department of Biotechnology, College of Science, University of Tehran, Tehran, Iran.}
\vspace{1em}
\centerline{\{m.arabzadeh,msedighi,szamani\}@aut.ac.ir, marashi@ut.ac.ir}
}

\begin{abstract}
Elementary flux mode (EFM) analysis is a well-studied method in constraint-based modeling of metabolic networks. In EFM analysis, a network is decomposed into minimal functional pathways based on the assumption of balanced metabolic fluxes.
In this paper, a system architecture is proposed that approximately models the functionality of metabolic networks. The AND/OR graph model is used to represent the metabolic network and each processing element in the system emulates the functionality of a metabolite. The system is implemented on a graphics processing unit (GPU) as the hardware platform using CUDA environment.
The proposed architecture takes advantage of the inherent parallelism in the network structure in terms of both pathway and metabolite traversal. The function of each element is defined such that it can find flux-balanced pathways.
Pathways in both small and large metabolic networks are applied to the proposed architecture and the results are discussed.
\end{abstract}

%\begin{keyword}
\small {\emph{Keywords:} Elementary Flux Mode (EFM); Graph Data Model; Graphics Processing Unit (GPU); Metabolic Pathways}
%\MSC[2010] 00-01\sep  99-00
%\end{keyword}

%\end{frontmatter}

%\linenumbers
\normalsize
%%---------------------------------------------------------------------------------------------------------------------------------------------
\section{Introduction}\label{sec:introduction}
\emph{System architecture} is defined as a generic discipline to handle ``systems'' considered as existing or to be created objects~\cite{crowder2015multidisciplinary}. The main purpose of this modeling is to support reasoning about the structural behavior and properties of the objects. System architecture helps to consistently describe and efficiently design complex systems such as an industrial system or an organization which can be comprise of smaller parts called ``subsystems'' as shown in Fig~\ref{fig:system}-a. In the proposed method, each subsystem is considered as a metabolite. The approach is extensively described in Section 3.

\emph{Systems biology}, known as the system-level integration of experimental and computational studies in biology, plays a significant role in understanding complex network systems~\cite{kitano2002computational}.
In biological systems, functions emerge both from the elements and the network interconnections. This underscores the importance of computational analysis to extract useful information from biological data.

Reconstruction of genome-scale metabolic network models, which are among the best-studied models in biotechnology, is possible by collecting the gene-protein-reaction information from related biochemical databases, genome annotations and literature~\cite{henry2010high}.
Imposing constraints on the fluxes of a reconstructed biochemical network results in the definition of achievable cellular metabolic functions~\cite{price2004genome}.
Several mathematical representations of constraints are typically used for constraint-based modeling of metabolic networks, including \emph{flux-balance} constraints (e.g., conservation of metabolic fluxes) and \emph{flux bounds}. The former means the network should be at the steady-state condition and the latter limits the numerical ranges of network parameters such as the minimum and maximum range of fluxes for each reaction.

Elementary flux mode (EFM) analysis is a well-studied method in constraint-based modeling of metabolic networks. In EFM analysis, a network is decomposed into minimal functional pathways~\cite{klipp2008systems} based on the assumption of balanced metabolic fluxes.
All various analogous concepts for generating flux vectors of a network, like extreme pathways~\cite{papin2002extreme}) and minimal generating vectors~\cite{bordbar2014minimal}, produce a subset of EFMs.

%with applications in metabolic network analysis~\cite{lopar2014study}.
%%------------------------

The use of EFM analysis~\cite{schuster1994elementary,schuster2000general} is of interest in several biological applications.
Bioengineering \cite{schuster2002use},
phenotypic characterization \cite{radhakrishnan2010phenotypic},
drug target prediction \cite{parvatham2013drug} and
strain design \cite{machado2015co} can be mentioned among these applications.
%%------------------------

Double-description is a technique to enumerate all extreme rays of a polyhedral cone that is widely used in finding EFMs. Some of these methods use the stoichiometric matrix itself~\cite{schuster1994elementary} and some others use the null-space of the stoichiometric matrix to generate EFM candidates~\cite{wagner2004nullspace,urbanczik2005improved,quek2014depth}.
Computational approaches have been proposed to speed up double-description-based methods to compute EFMs~\cite{gagneur2004computation,terzer2008large} and some of them led to the development of computational tools such as \emph{Metatool}~\cite{von2006metatool} and \emph{EFMtool}~\cite{terzer2008large}.
Some recent methods, such as~\cite{ullah2016gefm}, try to bring the insights of graph-theory to generating EFM candidates. Hardware platforms such as GPUs are also used as accelerators to speed up the process of generating candidates~\cite{khalid2013heterogeneous}.
Besides, methods that explore a set of EFMs with specific properties, such as $K$-shortest EFMs~\cite{de2009computing}, or EFMs with a given set of target reactions~\cite{david2014computing}, have been proposed based on linear programming.

%%-------------------------

The main focus of this paper is to propose a system architecture based on the AND/OR graph data model~\cite{arabzadeh2017graph}.
In the AND/OR graph representation of a metabolic network, metabolites are considered as graph nodes and connected to each other through hyperarcs which model as reactions.
The input of this system is the stoichiometric matrix of the given metabolic network. Its output is the set of EFMs of the network as shown in Fig~\ref{fig:system}-b. The term \emph{systems biology} should not be confused with \emph{system architecture}. ``Systems biology'' is a general term which is used for describing the holistic view of a biological entity. In systems biology, finding EFMs is a classical problem which can help in understanding the global behavior of metabolism. Here, the problem of finding EFMs in the context of systems biology is modeled as a system architecture.
The smallest repetitive part of the system is considered as a module that emulates the function of a metabolite to find the set of minimal flux-balanced pathways known as EFMs.
The system has a network topology that enables a parallelizable computational scheme. Designing a model to map a biological network to a hardware platform to take advantage of the potential multicore computational power, and not necessarily just a hardware accelerator, is the main contribution of this paper.
The underlying innovation in our proposed method is not in a parallel computing implementation of the existing algorithms. Instead, we consider a metabolic network as a system with seemingly independent subsystems that have very intricate relationships with their surroundings. While each subsystem is acting in parallel with the other ones, it is tightly coupled with the rest in terms of the inputs it receives and outputs it generates. We call this architectural or topology-based parallelism. Inspired by this observation, we define a model that closely resembles a real metabolic network. In this model each metabolite is represented by an ``independent'' processing element that is busy performing its own reactions by processing its inputs and producing its outputs. Once this model is established, GPU seems like a natural choice to implement the notion of topology-based parallelism.

%EFMs are a subset of this set. Assuming that there is no delay in the network when fluxes get values, the semi-minimal set is minimal and exactly equals the EFMs. Since the goal is to find EFMs, in the rest of the paper the term EFM is used for the candidate pathways generated by the system.

%...........................................
\begin{figure}[!h]
\centering
\includegraphics[scale=0.8]{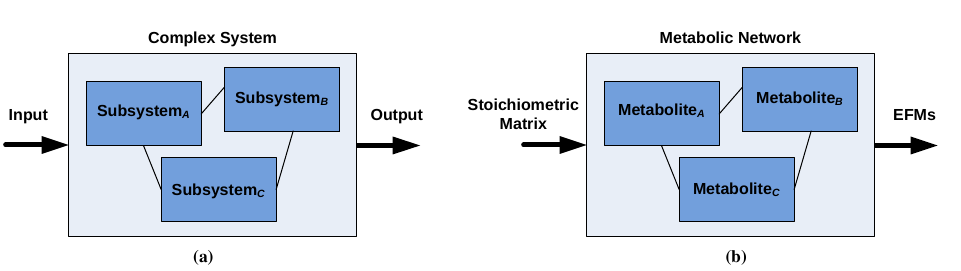}
\caption{{\bf System architecture.}
(a) Designing a distributed and analyzable structure for a complex system consider as a system architecture. (b) System architecture view of the proposed method.}
\label{fig:system}
\end{figure}
%...........................................

%%-------------------------
%%The rest of the paper is organized as follows. Preliminaries are provided in Section \emph{Preliminaries}. The main system design along with required definitions are discussed in Section \emph{Materials and methods}. Section \emph{Implementation and analysis} is devoted to implementation and results. In \emph{Technical details} and \emph{Discussion} sections design metrics and specifications are discussed and, finally, Section \emph{Conclusion} concludes the paper.

The rest of the paper is organized as follows. Preliminaries are provided in Section 2. The main system design along with required definitions are discussed in Section 3. Section 4 is devoted to implementation and results. The design metrics and specifications are discussed in Sections 5 and 6, finally, Section 7 concludes the paper.

% needed in second column of first page if using \IEEEpubid
%\IEEEpubidadjcol
%%-----------------------------------------------------------------------------------------------------------------------------------
%%-----------------------------------------------------------------------------------------------------------------------------------
\section{Preliminaries}\label{sec:premi}
In this section, some basic concepts of metabolic networks and their counterpart graph data model, formal definition of EFMs and a brief introduction to parallel computing are provided.
\subsection{Metabolic networks}
Metabolic networks model the metabolism of living cells in terms of a set of biochemical reactions. The biochemical reactions can be irreversible or reversible which means the reaction can be active only in one direction, or can be active in both directions, respectively. The contributing metabolites in a reaction can be either substrates or products. Substrates are consumed and products are produced during the operation of a reaction.
The topology of a metabolic network is characterized by its $m \times n$ stoichiometric matrix, \textbf{S}, where $m$ and $n$ correspond to the number of metabolites and the number of reactions, respectively. The value $S_{ij}$ represents the stoichiometric coefficient of the metabolite $i$ in the reaction $j$. $S_{ij}$ is positive/negative if the metabolite $i$ is produced/consumed. If this coefficient is zero it means that the metabolite $i$ does not contribute to the reaction $j$.
The network is considered in the steady-state if for each internal metabolite, the rates of consumption and production are equal. The reactions connected to the external metabolites are called \emph{Boundary} reactions.
%Definitions in this section are derived from~\cite{zanghellini2013elementary}.\\

%%-------------------------------------------------------------
\noindent
\textbf{Definition 1. Flux Mode.}
A \emph{flux mode} $\textbf{v} \in \mathcal{R}^n$ illustrates flux distributions of a set of reactions in a given metabolic network. Non-zero values in $\textbf{v}$ represent reaction fluxes.

%%-------------------------------------------------------------
\noindent
\textbf{Definition 2. Flux-Balanced Mode.}
A flux mode $\textbf{v} \in \mathcal{R}^n$ and $\textbf{v} \neq \textbf{0}$ is called \emph{flux-balance}, if it meets the following conditions:
\begin{itemize}
  \item $v_{i} \geq 0$ for all $i \in$ irreversible reactions (thermodynamic constraint) and
  \item $\textbf{S}.\textbf{v} = \textbf{0}$, i.e., the rates of consumption and production of internal metabolites are equal (steady-state condition). The low-dot operator simply is the matrix inner product.
\end{itemize}

%%-------------------------------------------------------------
\noindent
\textbf{Definition 3. Elementary Flux Mode.}
A flux mode $\textbf{v} \in \mathcal{R}^n$ and $\textbf{v} \neq \textbf{0}$ is considered an EFM, if it meets the following conditions:
\begin{itemize}
  \item $\textbf{v}$ is flux-balanced based on Definition 2,
  \item there is no $\textbf{v}^{\prime} \in \mathcal{R}^n$ with $\textbf{supp}(\textbf{v}^{\prime}) \subset \textbf{supp}(\textbf{v})$, where support of a mode is defined as $\textbf{supp}(\textbf{v})= \{i|v_i \neq 0\}$, (minimality limitation).
\end{itemize}
%%-------------------------------------------------------------

In Fig~\ref{fig:exNet}, an example of a metabolic network is illustrated.

%...........................................
\begin{figure}[!h]
\centering
\includegraphics[scale=0.7]{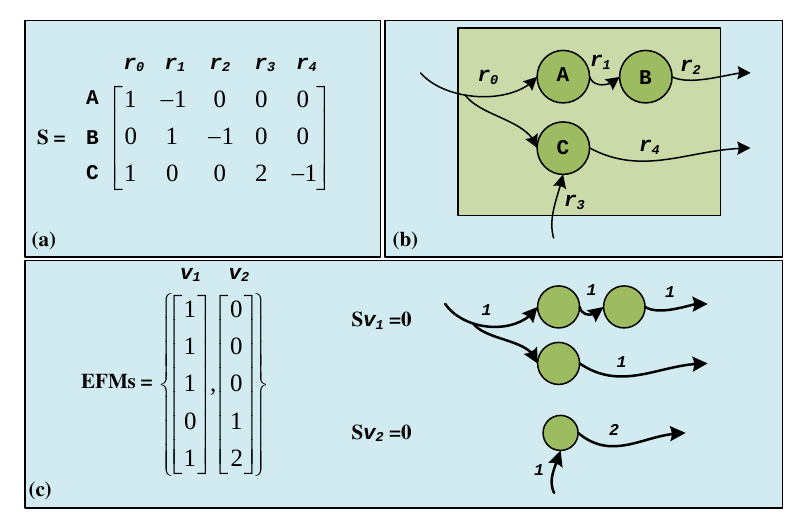}
\caption{{\bf An example of a metabolic network.}
(a) Stoichiometric matrix of the network. (b) Graph representation of the network. (c) The set of EFMs of the network.}
\label{fig:exNet}
\end{figure}
%...........................................

%%-------------------------------------------------------------
\subsection{Metabolic network data model}
The used data model in this paper is derived from~\cite{arabzadeh2017graph}. The model is based on the conventional AND/OR graph in computer science with additional features to make the model appropriate for object-oriented methods considering the metabolites as objects. In this model, the coefficients of the metabolites in reactions are embedded in the graph structure as attributes of each node (defined in Definition 6). The information of the candidate pathways is also embedded in the graph structure.\\

%%--------------------------------------------------------------------------------------------------------
\noindent
\textbf{Definition 4.} Modified graph for representing a metabolic network, denoted as $\mathbf{MG}$, is defined as a set of \emph{Nodes}, i.e., $\mathbf{MG}$ = $\{\mathcal{N}_i|0\leq i\leq {M-1}\}$, where $M$ is the number of internal metabolites and each $\mathcal{N}_i$ represents a metabolite in the network.\\

%%--------------------------------------------------------------------------------------------------------
\noindent
\textbf{Definition 5.} Each node $\mathcal{N}_i$ in $\mathbf{MG}$ is a 3-tuple $\mathcal{N}_i$ = ($i$, $I$, $O$) where
\begin{itemize}
  \item $i$ is the tag of a metabolite,
  \item $I$ is an array of input reactions that produce the metabolite and
  \item $O$ is an array of output reactions that consume the metabolite.
\end{itemize}

%%--------------------------------------------------------------------------------------------------------
\noindent
\textbf{Definition 6.} Each $I$/$O$ in Definition 5 contains the following data:
\begin{itemize}
  \item The reaction $j$, $0\leq j\leq r-1$, where $r$ is the number of reactions consuming/producing the metabolite $i$,
  \item $I_{M_j}$/$O_{M_j}$, an array of the metabolites consumed/produced by reaction $j$. In other words, $I_{M_j}$/$O_{M_j}$=$\{m_{kj}|0\leq k\leq m-1\}$, where $m$ is the number of consumed/produced metabolites by the reaction $j$,
  \item $I_{{\mathord{\buildrel{\lower3pt\hbox{$\scriptscriptstyle\frown$}}\over M}}_j}$/$O_{{\mathord{\buildrel{\lower3pt\hbox{$\scriptscriptstyle\frown$}}\over M}}_j}$, an array of the metabolites produced/consumed by reaction $j$. In other words, $I_{{\mathord{\buildrel{\lower3pt\hbox{$\scriptscriptstyle\frown$}}\over M}}_j}$/$O_{{\mathord{\buildrel{\lower3pt\hbox{$\scriptscriptstyle\frown$}}\over M}}_j}$=$\{m_{kj}|0\leq k\leq m-1\}$, where $m$ is the number of produced/consumed metabolites by the reaction $j$ excluding the metabolite $i$ itself,
  \item The direction of the reaction for reversible reactions,
  \item $I_{c_{ij}}$/$O_{c_{ij}}$, the coefficient of the reaction $j$ for the produced/consumed metabolite $i$ in the stoichiometric matrix $S$ and
  \item $I_{f_{ijp}}$/$O_{f_{ijp}}$, the flux of the input/output reaction $j$ for a certain path $p$.
\end{itemize}

\noindent
For bidirectional reactions, the term ``direction'' is used to clarify that either the reactants react to form the products, or the products react together to produce the reactants in the backward direction. A \emph{hyperarc} links a set of nodes to another set in one connection. The incoming hyperarcs (i.e., $I$ in Definition 5) in each $\mathcal{N}_i$ produce the metabolite $i$ and the outgoing hyperarcs (i.e., $O$ in Definition 5) consume it. Therefore, consumed metabolites, $m_i$, and produced metabolites, $m^{\prime}_{i}$, contributing to reaction $j$ as shown in Eq~\ref{eq:1}, are as hyperarcs between $\mathcal{N}_i$ nodes each associated to one metabolite. $I$ hyperarcs and $O$ hyperarcs in $\mathcal{N}_i$ nodes are related to each other by reaction tags in Definition 6. The set of metabolites in $m_i$, $m^{\prime}_{i}$, are \emph{AND-related}. The incoming $I$ (outgoing $O$) hyperarcs for each $\mathcal{N}_i$ are \emph{OR-related}.

\begin{equation}\label{eq:1}
\small
r_j: m_1 + m_2 + ... + m_i \rightleftharpoons m^{\prime}_{1} + m^{\prime}_{2} + ... +m^{\prime}_{i}\textbf{.}
\end{equation}
%...........................................
%%-------------------------------------------------------------
\noindent
\textbf{Definition 7. Pathway.}
A pathway in a graph data model is defined as a chain of adjacent $\mathcal{N}_i$ nodes such that they form a hyperpath from a source node to a sink node. Adjacent nodes represent the metabolites which contribute in a common reaction.

%%-------------------------------------------------------------
A pathway can be seen as a subgraph of the AND/OR representation of a metabolic network. In addition, a pathway can be represented as a vector in a convex flux cone. It should be noted here that these two definitions are shown to be equivalent~\cite{ zevedei2003topological}.

%%-------------------------------------------------------------
%%-------------------------------------------------------------
\subsection{Parallel computing and programming}
Parallel computing is a type of computation in which independent executable elements of a system can run simultaneously. This is based on the assumption that large problems can often be divided into smaller ones, which can then be solved at the same time.
Solving complex scientific and engineering problems require huge computational power.
Processing power in high performance computing (HPC) is measured by floating point operations per second, or \emph{FLOP/s} and the size of data in \emph{Bytes} considering that a double precision floating point number takes 8 bytes of memory.
Memory and timing constraints are among principal reasons that move the trend to parallel computing and programming frameworks.
Each memory access can take several CPU cycles and increasing clock frequency does not readily improve performance anymore.
%In addition, increasing frequency and decreasing transistor size lead to both static and dynamic power issues.

There are several forms of parallel computing such as instruction-level and thread-level schemes.
Instruction-level parallelism (ILP) is defined as a parallelism among individual instructions executed by a microprocessor. Parallel parts are automatically extracted by the microprocessor.
This type of parallelism is limited by the pipeline depth and instruction dependencies.
%A thread is a stream of instructions with execution context and
On the other hand, thread-level parallelism (TLP) happens when a set of multiple concurrent tasks are running and the programmer\textquotesingle s involvement is required to extract the parallelism as done in our model.

Considering data along with instructions, two models are defined. In single instruction multiple data (SIMD) model, a large number of (usually small) processing elements apply a single instruction on multiple data sets. In other words, a single controlling processor issues each instruction and each processing element executes the same instruction.
Multiple instruction multiple data (MIMD) is another model in which each processor executes its own sequence of instructions independently.
An extended model of SIMD, single thread multiple data (STMD), is used in the proposed model and will be explained later.

A programming model is a bridge between a system developer\textquotesingle s natural model of an application and an implementation of that application on the target hardware. A programming model must allow the programmer to balance the competing goals of productivity, in terms of time and resource, and implementation efficiency~\cite{asanovic2006landscape}.

Different hardware platforms and programming environments are introduced in the recent years to address the parallel computing requirements. The proposed system independent of the underlying hardware is presented in Section 3 and the appropriate hardware for our proposed model and its programming requirements are discussed in Section 4.

%%-----------------------------------------------------------------------------------------------------------------------------------
%%-----------------------------------------------------------------------------------------------------------------------------------
\section{Materials and methods}\label{sec:method}
In this section, the proposed system architecture is introduced and then the relationship between the system and the applied data model is discussed.
The main target of the system is to model the function of each metabolite and create minimal pathways along these functional systems.
To better describe the model, a set of terms are defined in the following section.

\subsection{Definitions}
\noindent
\textbf{Definition 8. Pathway Creation.}
\emph{Pathway creation} in a graph data model is defined as the process of starting from a boundary reaction (i.e., an external arc) and following the chain of reaction-metabolites, recursively adding the AND-related nodes to a list. When a reaction adds a metabolite to the list of the pathway, the metabolite is called to be triggered by the reaction.

\noindent
\textbf{Definition 9. Forward/Backward Flow.}
The term $forward$ is applied to a flow when the process of pathway creation is in a forward direction for a particular metabolite, which means that the metabolite is produced according to the flux changes of the input/output reactions of that metabolite. The term $backward$ is applied when the flow is in a backward direction, i.e, the metabolite is consumed according to the flux changes.\\

\noindent
\textbf{Definition 10. Primary and Secondary Reactions.}
On each pathway, each node (metabolite) has a \emph{primary} input (PI) and a \emph{primary} output (PO). In forward/backward flow, a PI/PO is the first reaction which enters a node and a PO/PI is the first reaction which directs the flow of the pathway and tags the node as \emph{visited}. When an edge (i.e., a reaction) enters an already visited node in that pathway in a forward/backward flow as an input/output, that reaction is tagged as a \emph{secondary} input/output (SI/SO) of that node.\\

For example, when metabolites $A$ and $B$ are AND-related, if $A$ is consumed during the reaction $A$+$B$$\rightarrow$$C$+$D$, then $B$ should also be present in the medium which means $A$ and $B$ should be consumed simultaneously. The forward/backward flow is designed for the produced/consumed metabolites~\cite{arabzadeh2017graph}. In this example, the metabolites $A$ and $B$ are consumed while $C$ and $D$ are produced. The order of adding nodes to the list in the pathway creation process in the forward/backward flow names reactions as primary or secondary. An example of primary and secondary reactions is illustrated in the graph model of Fig~\ref{fig:PriSec}.

%...........................................
\begin{figure}[!h]
\centering
\includegraphics[scale=0.9]{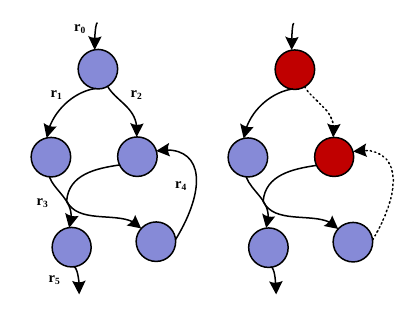}
\caption{{\bf Primary and secondary reactions.}
The pairs (r0,r1), (r1,r3), (r3,r5), (r3,r2), (r3,r4) act as primary reactions (input, output) for the
five nodes since they tag the nodes as visited in the pathway creation process. The two nodes with dashed reactions are already visited when r2 and r4 enter, so the reactions are tagged as secondary reactions.}
\label{fig:PriSec}
\end{figure}
%...........................................

\noindent
\textbf{Definition 11. Flux-dependent reactions.}
A primary output and a secondary input in a forward flow (or a primary input and a secondary output in a backward flow) are called flux-dependent reactions if the pathway which goes through the primary reaction, extends to the secondary reaction. In this case, their fluxes are linearly dependent over the target node. The dependency is shown by the sign $\propto$ in the rest of the paper. The two dependent types are illustrated in Fig~\ref{fig:FluxDep}.\\

%...........................................
\begin{figure}[!h]
\centering
\includegraphics[scale=0.55]{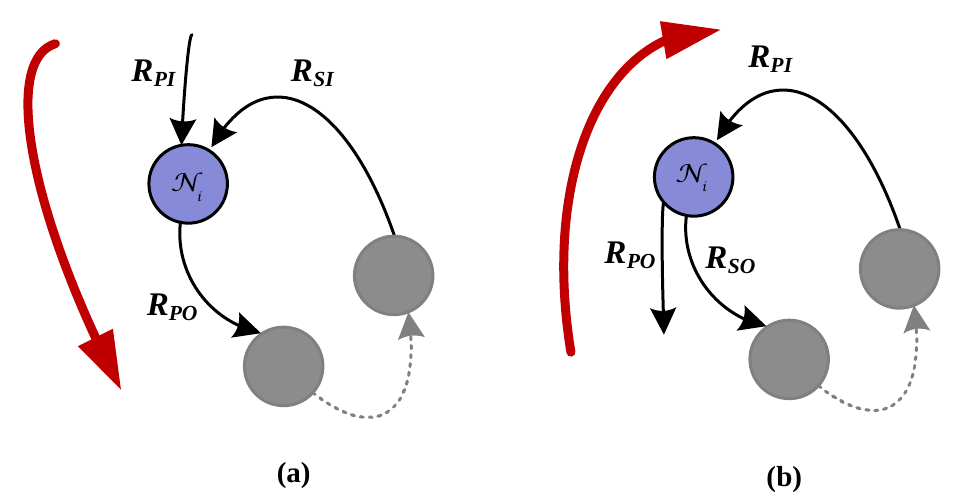}
\caption{{\bf Flux dependencies.}
Reactions with flux dependencies with respect to metabolite $\mathcal{N}_i$. (a) $R_{PO}$ (primary output) and $R_{SI}$ (secondary input) are recognized as flux-dependent reactions in a forward flow, $R_{PO} \propto R_{SI}$. (b) $R_{PI}$ (primary input) and $R_{SO}$ (secondary output) are recognized as flux-dependent reactions in a backward flow, $R_{PI} \propto R_{SO}$.}
\label{fig:FluxDep}
\end{figure}

\subsection{The proposed architecture}
In this section, the main architecture of the system to model and analyze metabolic networks is provided. The main goal of the system is to create minimal  flux-balanced metabolic pathways. In the proposed architecture, each node in an $\mathbf{MG}$ graph is considered as a processing element or module with a local memory. The whole system has a global memory used by the processing elements to synchronize their tasks.
The concept behind the model is that by starting from boundary metabolites (or any other internal metabolite set), each incoming flux in a pathway produces a set of metabolite(s), and each produced metabolite should be consumed. The flux of the contributed reactions can be obtained according to the stoichiometric coefficients in order to make sure that the constructed pathway is in steady-state. Only one output from the node should be considered when constructing a pathway. When one metabolite is consumed by a reaction, the presence of its AND-related metabolites is required. Therefore, the algorithm is designed to have a forward/backward flow for the produced/consumed metabolites.
The system architecture is illustrated in Fig~\ref{fig:design}. It consists of $m$ META modules, each corresponding to one node, and an arbiter which evaluates the constructed pathways and decides whether they belong to a subset of the desired pathways or not.
The modules are introduced in further detail below.

%...........................................
\begin{figure}[!h]
\centering
\includegraphics[scale=0.7]{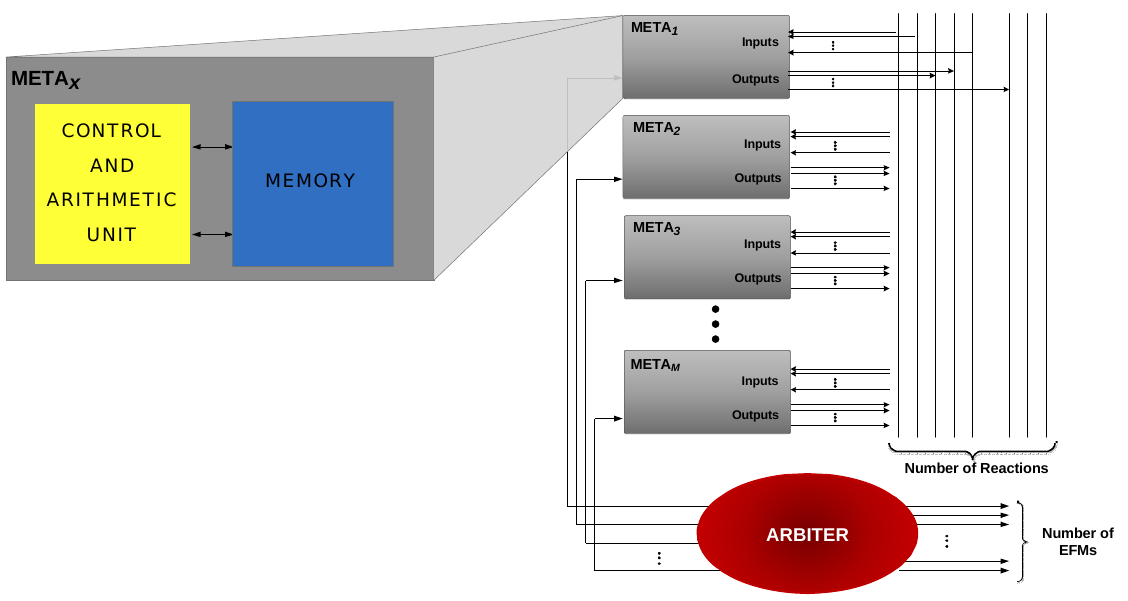}
\caption{{\bf Main architecture.}
The main architecture of the hardware model of a metabolic network.}
\label{fig:design}
\end{figure}
%...........................................
%\begin{figure*}[!tpb]
%\centering
%\includegraphics[scale=0.7]{files/design}
%\caption{The main architecture of the hardware model of a metabolic network.}
%\label{fig:design}
%\end{figure*}
%...........................................

\subsubsection{Modules: META$x$}
Each META module with the index $x$ consists of a control/arithmetic unit and a memory unit. The task of the module is to keep the information of the pathways (i.e., EFM candidates) and to execute the required instructions based on the obtained information from the neighboring META modules. Two modules are neighbors if they are data dependent (i.e., have shared reactions). The two internal units of the META$x$ module are discussed below.

\textbf{Control and arithmetic unit.}
Fig~\ref{fig:controlChart} illustrates the flowchart of the operation of the control and arithmetic unit. The task of all boxes labeled as PROC in the figure are summarized in Table~\ref{tab:table1}.

%...........................................
\begin{figure}[!h]
\centering
\includegraphics[scale=0.55]{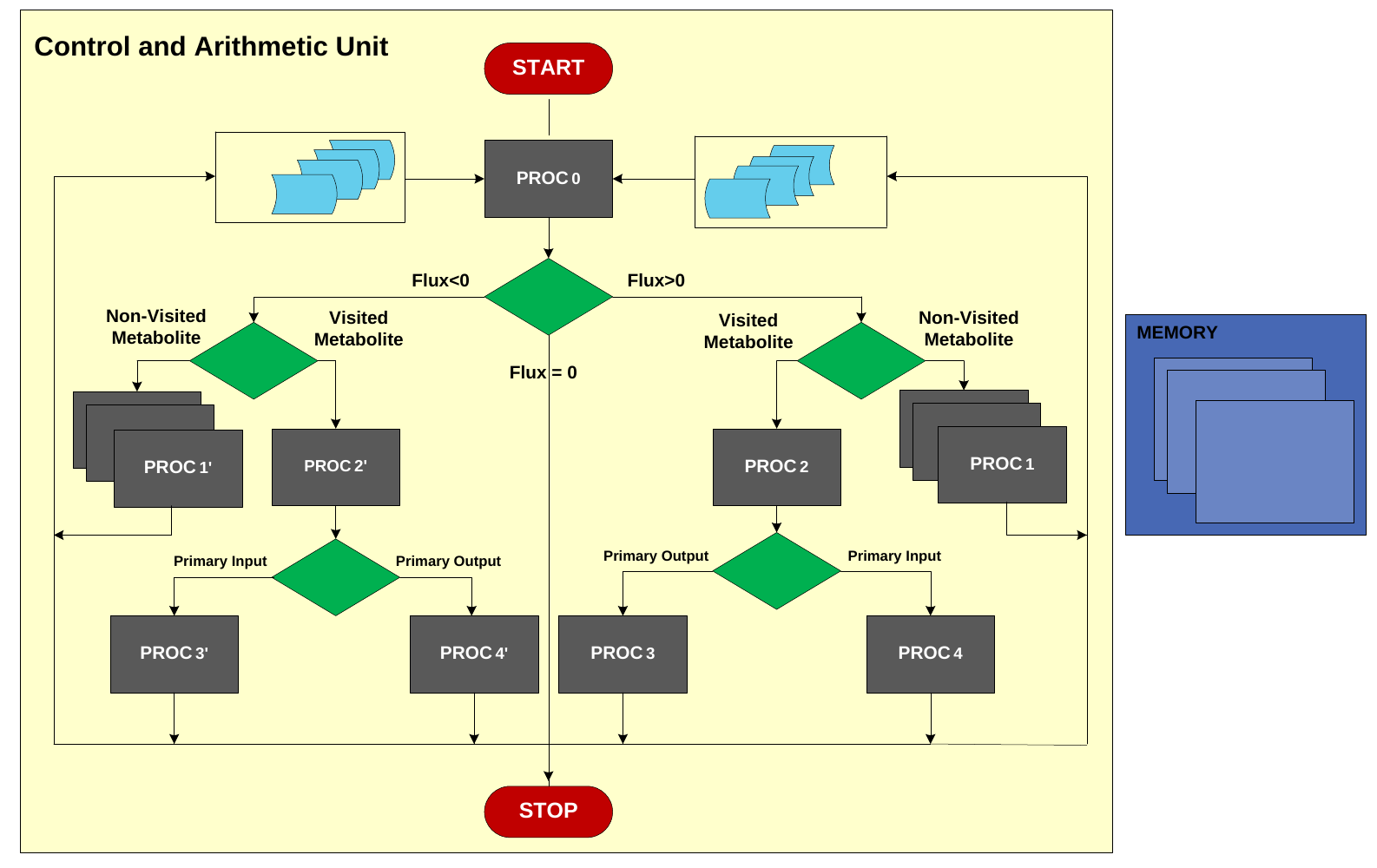}
\caption{{\bf Control and arithmetic unit.}
The flowchart of the control and arithmetic unit of the META$x$ module.}
\label{fig:controlChart}
\end{figure}
%...........................................
%\begin{figure*}[!tpb]
%\centering
%\includegraphics[scale=0.5]{files/controlChart}
%\caption{The flowchart of the control and arithmetic unit of the META$x$ module.}
%\label{fig:controlChart}
%\end{figure*}
%...........................................
%\input{files/table1.tex}
\begin{table}%[!tpb]
\caption{The task of the processes in Fig~\ref{fig:controlChart}.}
\label{tab:table1} {
\centerline{
\scriptsize
%\small
%\tiny
\begin{tabular}{ll}
\hline
Process Name		&Task  													\\
\hline
\hline
PROC$_{0}$		 &Calculate metabolite balance for each pathway using Eq. (2)		 				\\
PROC$_{1(1^\prime)}$	&For each output/input, start a new pathway using Eq. (3) and go to PROC${_0}$; 	\\
                        &Tag $Primary$ input/output reactions                             \\
PROC$_{2(2^\prime)}$	&Check reaction dependency status		 							\\
PROC$_{3(3^\prime)}$	&Use Eq. (4) to balance the flux	and go to PROC${_0}$ $or$  	 	\\
                        &go to STOP if MAX\_LOOP is reached                                                                   \\
PROC$_{4(4^\prime)}$	&Use Eq. (5) to balance the flux	and go to PROC${_0}$ $or$  		\\
                        &go to STOP if MAX\_LOOP is reached                                                   \\
\hline
\end{tabular}}
}
\end{table}
%...................................................

The thread inside the unit is running continuously until the arbiter closes the processing element or the maximum times a node is visited, represented by MAX\_LOOP, is reached.
In the beginning, the metabolite balance of each pathway, is calculated in PROC$_{0}$ based on Eq~\ref{eq:2}.

\begin{equation}\label{eq:2}
EF = \sum {I_{c_{ij} }I_{f_{ijp} }  }  - \sum {O_{c_{ij} }O_{f_{ijp} }  }.
\end{equation}

In this equation, \emph{EF} indicates the extra flux. Depending on the value of \emph{EF}, one of the three routes in the chart should be selected. If the metabolite is visited for the first time, new pathways are created for each output/input in PROC$_{1}$/PROC$_{1^\prime}$ using Eq~\ref{eq:3}.

\begin{equation}\label{eq:3}
\begin{array}{l}
 O_{f_{ijp} }  = {\rm   }\frac{{I_{f_{ijp} } I_{c_{ij} } }}{{O_{c_{ij} } }},\,\,\,\,\,\,\,\,\,\,\,\,\,\,\,\,EF>0\,\,\,\,\,\,\, \rm{(Forward)}\\\\
 I_{f_{ijp} }  = {\rm   }\frac{{O_{f_{ijp} } O_{c_{ij} } }}{{I_{c_{ij} } }},\,\,\,\,\,\,\,\,\,\,\,\,\,\,\,EF<0\,\,\,\,\,\,\,\rm{(Backward)}.\\
 \end{array}
\end{equation}

Otherwise in PROC$_{2}$/PROC$_{2^\prime}$ the reaction dependency is checked between the input/output flux-changed reactions. Based on the results of PROC$_{2}$/PROC$_{2^\prime}$, if reactions are independent, Eq~\ref{eq:4} is used in PROC$_{3}$/PROC$_{3^\prime}$ to pass the flux to PO/PI reactions.

\begin{equation}\label{eq:4}
\begin{array}{l}
 I_{f(n)_{ijp} }  = \frac{{I_{f(o)_{ijp} } I_{c_{ij} }  + O_{f_{iep} } I_{c_{ie} } }}{{I_{c_{ij} } }},\, \\\\
 O_{f(n)_{ijp} }  = \frac{{O_{f(o)_{ijp} } O_{c_{ij} }  + O_{f_{iep} } I_{c_{ie} } }}{{O_{c_{ij} } }}. \\\\
 \end{array}
\end{equation}

Otherwise, Eq~\ref{eq:5} is used and fluxes are passed to PI/PO reactions. In both PROC$_{3}$/PROC$_{3^\prime}$ and PROC$_{4}$/PROC$_{4^\prime}$, a variable which keeps track of the number of times a metabolite is visited is checked against a constant MAX\_LOOP to avoid the loop over a node forever.

\begin{equation}\label{eq:5}
\begin{array}{l}
 I_{f(n)_{ijp} }  = \frac{{I_{f(o)_{ijp} } I_{c_{ij} }  - I_{f_{iep} } I_{c_{ie} } }}{{I_{c_{ij} } }},\, \\\\
 O_{f(n)_{ijp} }  = \frac{{O_{f(o)_{ijp} } O_{c_{ij} }  - I_{f_{iep} } I_{c_{ie} } }}{{O_{c_{ij} } }}. \\
 \end{array}
\end{equation}

\textbf{Memory unit.}
The data stored in the memory unit of the META$x$ can be divided into two categories.
\begin{itemize}
  \item \emph{Metabolite local}: Consists of static and dynamic information. Static information of each metabolite includes input/output stoichiometry coefficients and dynamic information includes the metabolite status in each pathway which is changed over time.
  \item \emph{Path local}: Consists of dynamic information of a pathway.
\end{itemize}
Static and dynamic terms indicate Read-Only and Read/Write memory types, respectively. Fig~\ref{fig:memory} illustrates the structure of the memory unit.

%...........................................
\begin{figure}[!h]
\centering
\includegraphics[scale=0.6]{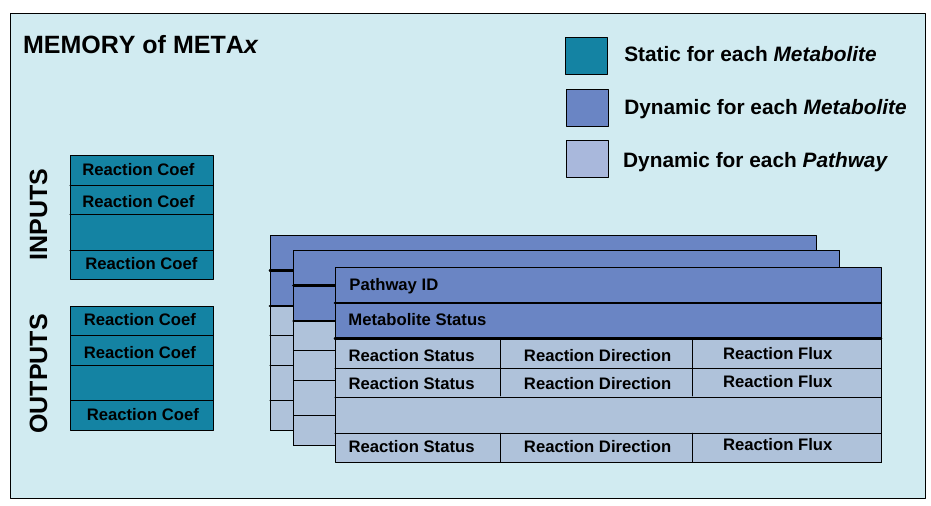}
\caption{{\bf Memory unit.}
The structure of the memory unit of META$x$..}
\label{fig:memory}
\end{figure}
%...........................................
%\begin{figure*}[!tpb]
%\centering
%\includegraphics[scale=0.55]{files/memory}
%\caption{The structure of the memory unit of META$x$.}
%\label{fig:memory}
%\end{figure*}
%...........................................

\subsubsection{Modules: Arbiter}
Based on the pathway information stored in META$x$ modules, the arbiter decides which pathway candidates are balanced and to be considered as EFMs. When all META$x$ modules are done with the analysis of all in-process pathways, the arbiter reports the result. The function of the arbiter is considered as a part of META$x$ function in Algorithm 3. The decision is made based on the status of each pathway specified in the algorithm as discussed later.

%\subsubsection{Connections}
%Connections between modules are depicted in Figure~\ref{fig:design} as wires to show how META$x$ modules and arbiter are connected through reaction statuses. This can be stored in local memory of each module accessible by the others or a global memory.

%%-----------------------------------------------------------------------------------------------------------------------------------
%%-----------------------------------------------------------------------------------------------------------------------------------
\section{Implementation and analysis}\label{sec:IMPRES}
The hardware platform introduced in Section 3 is implemented on a graphic processor unit (GPU). To demonstrate the operation of this platform, a metabolic network is considered for experiments and the results are provided in the remainder of this section. The platform consists of a set of parallel threads of STMD type where each META$x$ module runs the same thread for various data sets. Complication arises because of the inevitable dependencies between the modules\textquotesingle s data.

\subsection{General-purpose graphics processor units}
GPUs were originally developed for computationally intensive graphics of games and image processing applications in computer systems. But recently, their vast computational power has been utilized in general purpose computing.
%General-purpose computing on graphic processor units (or GPGPU) is a use of GPUs for general-purpose applications through CPU.

There are several reasons to consider GPUs as the hardware platform to emulate metabolic networks in this research.
Firstly, a suitable environment for software development based on C/C++ is available for high-level and easy programming which is convenient to try alternative options for running test routines.
Secondly, a large number of floating point units (FPUs) in GPUs can operate in parallel to accelerate our analysis.

A general architecture of a GPU is shown in Fig~\ref{fig:gpu}-a in which the main program is running on a CPU. To run a process on the GPU, first the data in the CPU memory is copied to the GPU memory. When the process is done on the GPU, the result is copied from the GPU memory to the CPU memory. The parallel functional units of the GPU are $blocks$, each consisting of a set of $threads$. All threads in all blocks perform the same function in one access to the GPU. In other words, the main stream of the program is running on the CPU and function calls from the CPU initialize a process on all GPU threads. By the term \emph{access}, we mean the function call that is invoked by the CPU to setup the GPU. Therefore, all threads run simultaneously from a programmer\textquotesingle s point of view. A typical GPU has three memory levels: a local memory for each thread, a shared memory between threads of a block and a global memory.

The correspondence between the system architecture and the GPU platform is illustrated in Fig~\ref{fig:gpu}-b. Each META$x$ element stands on a thread and each block is dedicated to a candidate EFM pathway which means that different pathways are processed by different blocks.

%...........................................
\begin{figure}[!h]
\centering
\includegraphics[scale=0.7]{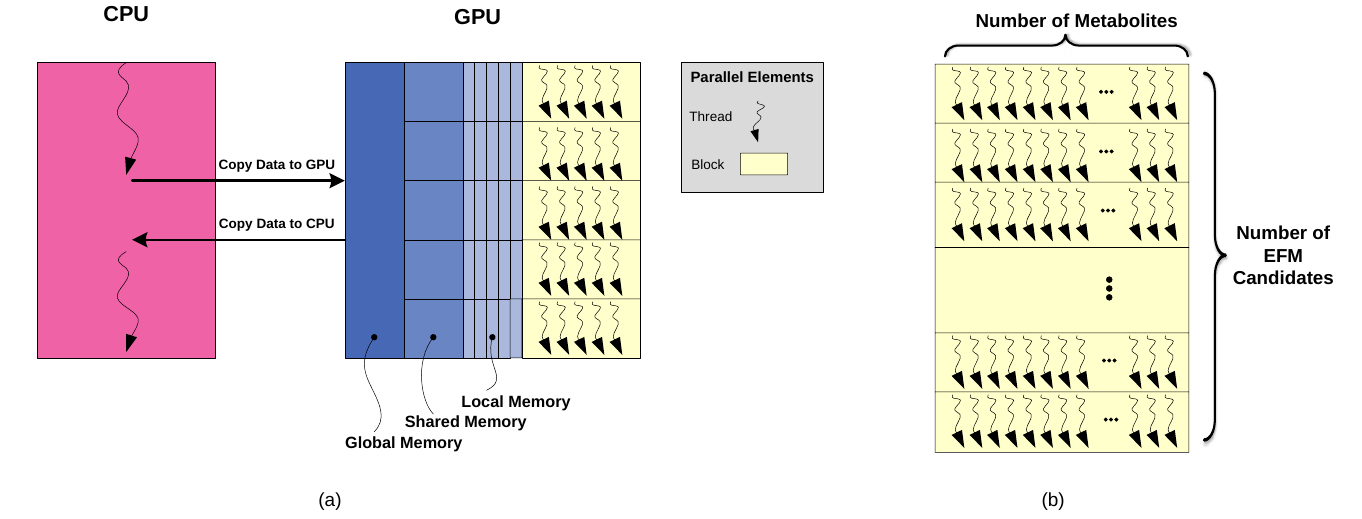}
\caption{{\bf The proposed system architecture on the GPU platform.}
(a) GPU general architecture and the way GPU and CPU communicate. (b) How the proposed system architecture is mapped onto the GPU.}
\label{fig:gpu}
\end{figure}
%...........................................
%\begin{figure*}[!tpb]
%\centering
%\includegraphics[scale=0.7]{files/gpu}
%\caption{(a) GPU general architecture and the way GPU and CPU communicate. (b) How the proposed system architecture is mapped to the GPU.}
%\label{fig:gpu}
%\end{figure*}
%...........................................

Algorithm~1 shows the main function running on the CPU. Three accesses, called \emph{kernel} functions, are made from CPU to GPU as shown in Fig~\ref{fig:algFig}. Details of each function are discussed below.

%...........................................
%\input{files/algM.tex}
\begin{algorithm}[!tpb]
\label{alg:main}
\caption{Main body of the algorithm.}
\textbf{         }\\
\DontPrintSemicolon
\SetAlgoLined
\scriptsize
\KwIn{The stoichiometric matrix of a given metabolic network.}
\KwOut{A set of minimal flux-balanced pathways of the network.}
\textbf{         }\\
\#define Number of input metabolites as     $M$		\\
\#define Number of input reactions as       $R$		\\
\#define Number of EFM Candidates as        $C$	    \\
\#define The selected depth value to find dependent reactions as     $L$     \\
\textbf{         }\\
\#define Max number of input/output reactions of a metabolite                $r_{max}$  \\
\#define Max number of input/output metabolites of a reaction                $m_{max}$  \\
\textbf{         }\\
\#define Number of blocks and threads pair as ($\#B$,$\#T$)           \\
\textbf{         }\\
Convert the given matrix to the $\mathbf{MG}$ graph structure.\\
Copy $\mathbf{MG}$ structure from CPU to GPU global memory.\\
\textbf{         }\\
\tcc{Finding dependent reactions with depth $L$ over all metabolites.}
Set ($\#B$,$\#T$) as ($M$,${r_{max}}^{L}$) to be run on GPU.\\
Run DEPTHy() on each thread.\\
\textbf{         }\\
\tcc{Initializing selected reactions with a nonzero value in all $Blocks$.}
Set ($\#B$,$\#T$) as ($C$,$R$) to be run on GPU.\\
Run INIT() on each thread.\\
\textbf{         }\\
\tcc{Calculating balanced pathways by keeping all metabolites in steady-state.}
Set ($\#B$,$\#T$) as ($C$,$M$) to be run on GPU.\\
Run METAx() on each thread.\\
\textbf{         }\\
Copy $\mathbf{MG}$ pathways information from GPU to CPU.\\
%\tcp*{}
%\tcc{}
\end{algorithm}
%...........................................
\begin{figure}[!h]
\centering
\includegraphics[scale=0.6]{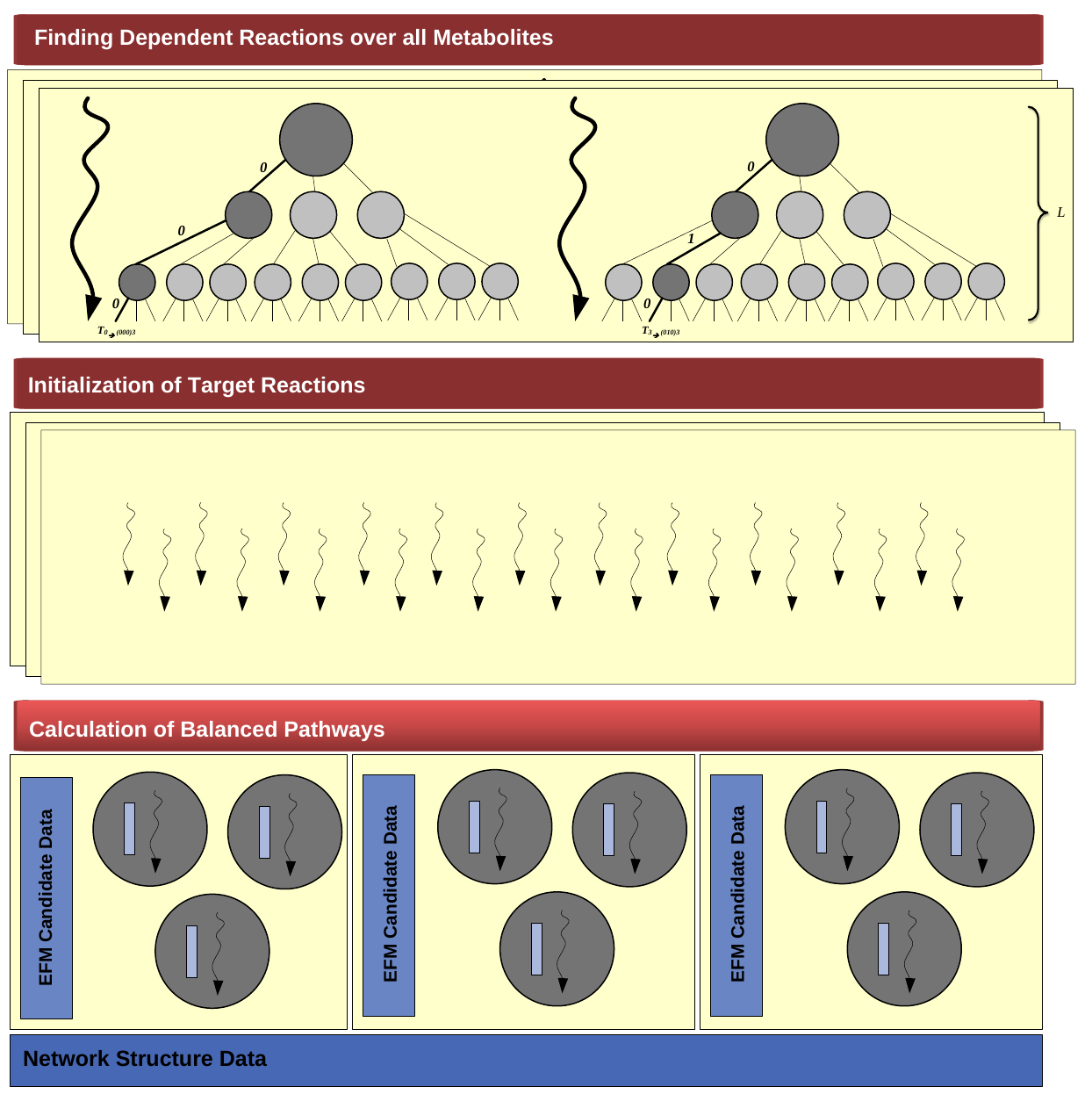}
\caption{{\bf Kernel functions.}
Three kernel functions are illustrated in the figure. First, dependent reactions are explored. Each pathway is assigned to one thread and the thread number in base $r_{max}$ in $L$ digits shows the order of the reactions. Two example for thread IDs 0 and 3 are shown in the figure. $r_{max}$ and $L$ are both considered as three here. The conversion of 0 and 3 to the base $r_{max}$ gives us $(000)_3$ and $(010)_3$ which specify the route of the pathway in the assumed thread. Each yellow box is considered as a block and the process repeats independently for all metabolites each of which dedicated to one block.
After initializing the target reactions in one pass, each block is assigned to an EFM candidate and each thread performs the META$x$ function. The blue rectangular shapes are memory elements showing the hierarchical memory levels in GPU.}
\label{fig:algFig}
\end{figure}
%...........................................
%\begin{figure*}[!tpb]
%\centering
%\includegraphics[scale=0.6]{files/algFig}
%\caption{
%Three kernel functions are illustrated in the figure. First, dependent reactions are explored. Each pathway is assigned to one thread and the thread number in base $r_{max}$ in $L$ digits shows the order of the reactions. Two example for thread IDs 0 and 3 are shown in the figure. $r_{max}$ and $L$ are both considered as three here. The conversion of 0 and 3 to the base $r_{max}$ gives us $(000)_3$ and $(010)_3$ which specify the route of the pathway in the assumed thread.
%Each yellow square is considered as a block and the process repeats independently for all metabolites each of which dedicated to one block.
%After initializing the target reactions in one pass, each block is assigned to an EFM
%candidate and each thread performs the META$x$ function. The blue rectangular shapes are memory elements showing the hierarchical memory levels in GPU.}
%\label{fig:algFig}
%\end{figure*}
%...........................................
%...........................................

\textbf{Finding dependent reactions over all metabolites.}
In the flowchart of the control/arithmetic unit (Fig~\ref{fig:controlChart}), flux-dependent reactions of a metabolite should be identified to make better decisions.
Algorithm~2 describes the DEPTH$y$ function which starts a path through a metabolite $\mathcal{N}_i$ and continues until the path length (i.e., the number of involved reactions on the path) is equal to $L$. If $\mathcal{N}_i$ is reached again in the created path with length $\leq L$, the involved reactions of $\mathcal{N}_i$ are characterized as flux-dependent.
To do this using the GPU, one block is dedicated to each metabolite. Each thread of each block is dedicated to a path starting from $\mathcal{N}_i$. Therefore, DEPTH$y$ function is executed on each thread of each block to traverse paths and resolve dependencies. Each thread of a block is identified by a block ID and a thread ID. The block ID is set to $i$ for the metabolite $\mathcal{N}_i$. The thread ID should specify a unique path. To do this, the thread ID is used to route the path and to reveal the route. We propose a static path-generator method in which the thread ID $j$ as a decimal number is converted to a number in base $r_{max}$ and length $L$ as $j$=$(L_{0\leq l < L})_{r_{max}}$=$(L-1, L-2,...,l,...,1,0)_{r_{max}}$.
$r_{max}$ is the maximum number of output reactions of a node $\mathcal{N}_i$. So, the pair ($\#$ of blocks, $\#$ of threads) is set as ($M$,${r_{max}}^{L}$) and DEPTH$y$ is set up on each.
We start from a node $\mathcal{N}_i$. The index $l$ counts the number of nodes traversed on the graph, that is, the parameter $depth$.
%Besides, the $l$-th digit of $(L)_{r_{max}}$ is the order of the reaction which should be traversed in the $l$-th node in the queue $Q$ shown as $Q_l$.
The digit in position $l$ represents the order of the reaction which should be selected in depth $l$. Consider that the node in depth $l$ in the queue $Q$ is represented by $Q_l$. The number of outputs of $Q_l$ is considered as $r$ which is the size of array $O$ of the node. If $l<r$, the process goes on, otherwise the path is undefined.
To continue the process, check to see if the node $Q_l$ is in the set of output metabolites of reaction $L_l$ shown by $O_{M_{L_l}}$ (see Definition 6). If the node is in the array, the process ends. Otherwise, it continues until $l$ reaches $L-1$.

%where $r$ is the exact number of output reactions of the in-process metabolite in depth $k$.}
%...........................................
%\input{files/alg1.tex}
\begin{algorithm}[!tpb]
\label{alg:one}
\caption{DEPTHy: Finding dependent reactions with depth $L$
over metabolite $\mathcal{N}_i$ on thread $T_j$ in a given $\mathbf{MG}$.}
\textbf{         }\\
\DontPrintSemicolon
\SetAlgoLined
\scriptsize
\KwIn{The $\mathbf{MG}$ graph of a metabolic network derived from its given stoichiometric matrix;
Metabolite $\mathcal{N}_i$; Thread ID $T_j$; Maximum number of input/output reactions of a metabolite $r_{max}$.}
\KwOut{Tagged dependent reactions of $\mathcal{N}_i$.}
\textbf{         }\\
Convert the decimal thread ID $j$ to an $L$-digit number in base $r_{max}$; i.e., $j$=$(L_{0\leq l < L})_{r_{max}}$=$(L-1, L-2,...,1,0)_{r_{max}}$\\
Store $\mathcal{N}_i$ in the queue $\mathcal{Q}$.\\
Set $l$ to 0.\\
Set $Start$ reaction as an input reaction of $\mathcal{N}_i$.\\
\While {$\mathcal{Q}\neq \emptyset$ and $l \leq L$}
{
    Select the output reaction at level $l$ based on $L_l$.\\
    Check if the output is valid according to the size of $O$ of $\mathcal{Q}_l$.\\
    \uIf{$\mathcal{Q}_l$ $\in$ $O_{M_{L_l}}$}
    {
        Set $O$ as $Last$ reaction.\\
        Tag the $Start$ reaction and the $Last$ reaction of $\mathcal{N}_i$ as dependent reactions.\\
        Exit \textbf{while}.\\
    }
     \Else
    {
        Push back  $O_{M_{L_l}}$ to $\mathcal{Q}$.\\
    }
}
\end{algorithm}
%...........................................

\textbf{Initialization of target reactions.}
In this function, the target reactions which are selected to contribute in the pathway are initialized with a default value (``1'' in our implementation). These reactions can be boundary reactions or any other set of reactions.
Connected metabolites to the initialized reactions are triggered by them.
%The initialized reactions trigger the metabolites connected to them.
To do this using the GPU, the pair ($\#$ of blocks, $\#$ of threads) is set as ($C$,$R$) where $C$ is the number of EFM candidate pathways and $R$ is the number of reactions of the network. This is done on the GPU in one pass. Each thread sets/resets the initial value of a reaction on each block which is dedicated to a candidate pathway.

\textbf{Calculation of balanced pathways.}
In META$x$ function in Algorithm~1, the pair ($\#$ of blocks, $\#$ of threads) is set as ($C$,$M$). Each block is dedicated to an EFM candidate pathway and each thread of the block is dedicated to a metabolite. The META$x$ function is set up on all threads. Algorithm~3 shows the function as sketched in the system design flow in Fig~\ref{fig:design}. In this function, each pathway is created using randomly selected output/input in the forward/backward flow.
EFM candidates at the end of the algorithm change to either \emph{Done} or \emph{notEFM} status.
The status remains as \emph{notDone} as long as there are still unbalanced metabolites on the pathway. Each thread executes a \emph{while(true)} loop. The while loop is broken when all candidates take a \emph{Done} or \emph{notEFM} status. The function of the arbiter module in Fig~\ref{fig:design} is embedded in the META modules performing META$x$ functions. This is because all threads should run the same function in each GPU access.

%...........................................
%\input{files/alg2.tex}
\begin{algorithm}[!tpb]
\label{alg:two}
\caption{METAx: on Thread $T_i$ of Block $B_j$.}
\textbf{         }\\
\DontPrintSemicolon
\SetAlgoLined
\scriptsize
\KwIn{The $\mathbf{MG}$ graph of a metabolic network derived from its given stoichiometric matrix;
Thread ID $T_i$; Block ID $B_j$.}
\KwOut{minimal flux-balanced pathways.}
\textbf{         }\\
\While {\emph{true}}
{
   \ForAll{Inputs and Outputs of $\mathcal{N}_{T_i}$}
   {
        Multiply the input/output coefficient to its flux value (i.e., $I_{c_{ij}}$$\times$$I_{f_{ijp}}$ and $O_{c_{ij}}$$\times$$O_{f_{ijp}}$ where $p$ is assigned to $B_j$) and add the value to \emph{ExtraFlux} as stated in Eq. (2).\\
   }
   \textbf{         }\\
   \textbf{         }\\
   %%........................
   \uIf{\emph{ExtraFlux} $=$ 0}
   {
    Tag $\mathcal{N}_{T_i}$ as a $Stable$ metabolite.\\
   }
   %%.............
   \ElseIf{\emph{ExtraFlux} $>$ 0}
   {
        \uIf{$\mathcal{N}_{T_i}$ is not $visited$}
        {
            Select a random output reaction.\\
            Calculate the flux using Eq. (3).\\
            Save \emph{Primary} input, $R_{PI}$, and \emph{Primary} output, $R_{PO}$, for $\mathcal{N}_{T_i}$.\\
            Tag the metabolite as \emph{visited}.\\
        }
        \Else
        {
            \uIf{$R_{PI}$$\propto$$R_{PO}$ is $\textbf{false}$ and \emph{MAX\_LOOP} is not reached}
            {
                Select output reaction $R_{PO}$ and use Eq. (4).
            }
            \ElseIf{$R_{PI}$$\propto$$R_{PI}$ is $\textbf{true}$ and \emph{MAX\_LOOP} is not reached}
            {
                Select input reaction $R_{PI}$ and use Eq. (4).
            }
            \ElseIf{\emph{MAX\_LOOP} is reached}
            {
                Tag the EFM candidate as \emph{notEFM}.

            }
        }
   }
   %%.............
   \ElseIf{\emph{ExtraFlux} $<$ 0}
   {
        \uIf{$\mathcal{N}_{T_i}$ is not \emph{visited}}
        {
            Select a random input reaction.\\
            Calculate the flux using Eq. (3).\\
            Save \emph{Primary} output and input for $\mathcal{N}_{T_i}$.\\
            Tag the metabolite as \emph{visited}.\\
        }
        \Else
        {
             \uIf{$R_{PI}$$\propto$$R_{PO}$ is $\textbf{false}$ and \emph{MAX\_LOOP} is not reached}
            {
                Select input reaction $R_{PI}$ and use Eq. (4).
            }
            \ElseIf{$R_{PI}$$\propto$$R_{PI}$ is $\textbf{true}$ and \emph{MAX\_LOOP} is not reached}
            {
                Select output reaction $R_{PO}$ and use Eq. (4).
            }
            \ElseIf{\emph{MAX\_LOOP} is reached}
            {
                Tag the EFM candidate as \emph{notEFM}.

            }
        }
   }
   \textbf{         }\\
   \textbf{         }\\
   %%........................
   \ForAll {Threads (Metabolites) in the Block $B_j$}
    {\If{ All metabolites are \emph{Stable} }
        {Tag the candidate as \emph{Done}.\\}
    }

    \ForAll {Blocks (EFM candidates)}
    {
     \If{All candidates tagged as \emph{Done} or \emph{notEFM}}
        {Exit \textbf{while}.\\ }
   }
}
\end{algorithm}
%...........................................

\subsection{Application on biologically relevant metabolic networks}\label{sec:appOnNetworks}
\subsubsection{CHO cell metabolism}
To prove the proposed concept on metabolic networks, a core model of Chinese Hamster Ovary (CHO) cell metabolism was chosen from~\cite{provost2004dynamic,jayapal2007recombinant}. CHO derived cell lines are the preferred host cells for the production of therapeutic proteins~\cite{xu2011genomic}. The stoichiometric matrix of this network is of size $12$$\times$$18$ with two boundary reactions $r_0$ and $r_{15}$. An NVIDIA GeForce GT330M GPU and CUDA version 5.5 platform were chosen. An Intel Core-i5 CPU with 4 GB RAM was used to run the test.
The source code and results are provided at https://github.com/marabzadeh/GPU-GB-EFM.
The network has 7 EFMs. Using random pathway creation, five EFMs were obtained. Table~\ref{tab:table2} summarizes the results after attempting 48 EFM candidates with depth $L=4$ and $r_{max}=4$. The global memory used to store the network on GPU was approximately 2 MBytes.
To test that all EFMs are computable by the approach, contributing reactions of each EFM were selected intentionally in the algorithm and their pathway were created. This test was passed and all EFMs were observed. Considering enough time and resource, the approach can lead to the calculation of all EFMs, as proved by Theorem 1.

\begin{table}%[!tpb]
\caption{The results for three kernel functions after running 48 EFM candidates with depth $L=4$ and $r_{max}=4$.
The columns \textbf{Local M} and \textbf{Shared M} report the memory usage in bytes in each thread and each block, respectively. The last column shows the number of registers allocated for each thread.}
\label{tab:table2} {
\centerline{
\scriptsize
%\small
%\tiny
\begin{tabular}{lcccccc}
\hline
Function Name	&$\#$of Candidates	&$\#$of Threads	&Duration ($\mu s$)	&Local M/Thread	&Shared M/Block &Registers/Thread \\
\hline
\hline
DEPTHy	&12	&256	&149.7		&816	&24	&23	\\	
INIT	&48	&18	&17.9		&0	&32	&12	\\
METAx	&48	&12	&15583.2	&60	&40	&27	\\
\hline
\end{tabular}}
}
\end{table}

Fig~\ref{fig:cho} illustrates the steps for creating one of the EFMs in the metabolic network of CHO cell. The table in the top-right corner of this figure shows the flux of reactions from the point of view of the in-process metabolites step by step. There are six steps to balance the pathways and Step 7 shows the fluxes of the balanced pathway.
The in-process metabolites in each step are shown by big-circles.
In each step, fluxes are not equal in active threads. This happens due to the accessibility of threads to the shared data of the reactions. To overcome the conflict, a flag is set when a reaction flux is read in Eq (1) and reset after write. If another thread reads the flag as set, it should wait for the flag to reset.

%...........................................
\begin{figure}[!h]
\centering
\includegraphics[scale=0.2]{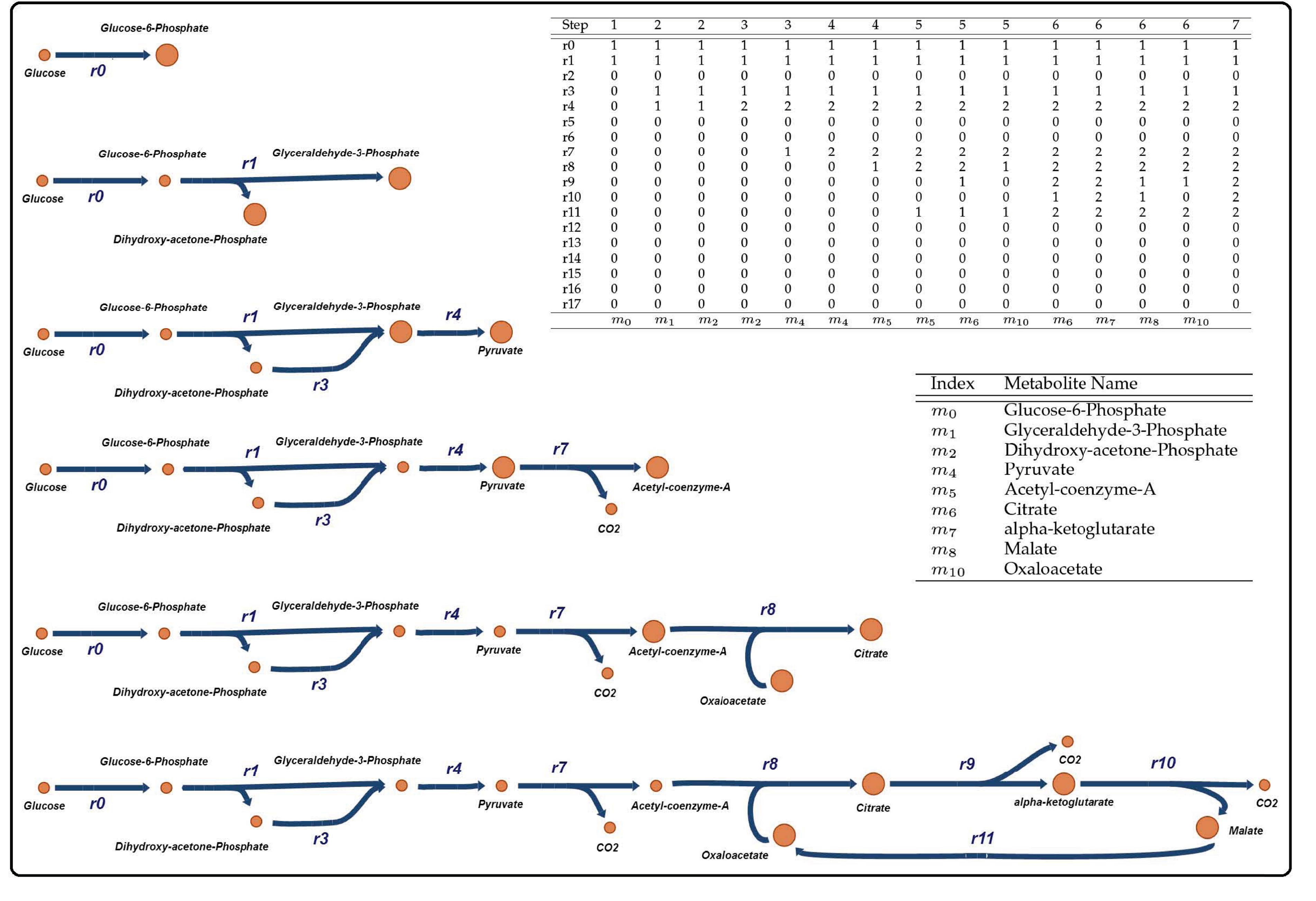}
\caption{{\bf An EFM creation in CHO metabolic network.}
The steps of creating one of the EFMs of the metabolic network of CHO cell. Six steps are required to balance the fluxes of the EFM. In each step, from top to down, in-process metabolites are shown with big-circles and stable metabolites are shown with smaller ones. The fluxes are shown in the table from the point of view of the in-process metabolites in each step. The last column, Step 7, is the final fluxes of the EFM identical from the view of all metabolites. Paths are depicted using Escher~\cite{king2015escher}}
\label{fig:cho}
\end{figure}
%...........................................
%\begin{figure*}[!tpb]
%\centering
%\includegraphics[scale=0.2]{files/cho.png}
%\caption{The steps of creating one of the EFMs of the metabolic network of CHO cell. Six steps are required to balance the fluxes of the EFM. In each step, from top to down, in-process metabolites are shown with big-circles and stable metabolites are shown with smaller ones. The fluxes are shown in the table from the point of view of the in-process metabolites in each step. The last column, Step 7, is the final fluxes of the EFM identical from the view of all metabolites.}
%\label{fig:cho}
%\end{figure*}
%...........................................

Table~\ref{tab:table3} reports some observations when different sets of reactions are initialized.
It should be mentioned that different starting reaction sets may potentially lead to different EFMs.
For each metabolite, the unbalanced status of the metabolite is counted. The number of conflicts is also reported. Subtracting the number of conflicts from the number of steps gives the actual number of parallel steps. The \emph{Sum} column shows the total number of times that all threads were activated and the \emph{Ave} column shows the average number of a thread is active, that is, the thread has the computational load of updating fluxes. The column \emph{Meta/Step} shows the average number of in-process metabolites at each step. As can be derived from the table, by initializing internal reaction(s), more steps are taken to make the pathway stable.

\begin{table}%[!tpb]
\caption{The operation of the system when different reaction sets are initialized. Three types of sets with one, two and three reactions are selected, in which only reaction 0 is a boundary reaction.}
\label{tab:table3} {
\centerline{
%\small
%\tiny
\scriptsize
\begin{tabular}{lccccccccccccccc}
\hline
Selected reactions	&Steps($\approx$)	&Conflict	&Step-Conf	&Sum	&Ave &Meta/step	&$m_0$	&$m_1$	&$m_2$	&$m_4$	&$m_5$	&$m_6$	&$m_7$	&$m_8$	 &$m_{10}$	\\
\hline
\hline
r0			&8		&1	&7	&16	&1.8	&2.3	&2	&1	&2	&2	&2	&2	&2	&1	&2		\\
r1			&8		&1	&7	&16	&1.8	&2.3	&2	&1	&2	&2	&2	&2	&2	&1	&2		\\
r7			&60		&14	&46	&67	&7.4	&1.5	&4	&3	&6	&4	&6	&14	&20	&4	&6		\\
r9			&61		&14	&47	&77	&8.6	&1.6	&4	&3	&6	&4	&6	&14	&19	&15	&6		\\
r1,10			&57		&14	&43	&77	&8.6	&1.8	&4	&3	&6	&4	&6	&14	&19	&14	&7		\\
r4,11			&59		&14	&45	&77	&8.6	&1.7	&4	&3	&6	&4	&6	&15	&19	&14	&6		\\
r1,7,9			&55		&16	&39	&78	&8.7	&2.0	&4	&3	&7	&4	&7	&13	&19	&15	&6		\\
r0,3,8			&56		&15	&41	&78	&8.7	&1.9	&5	&3	&6	&4	&6	&14	&20	&14	&6		\\
\hline
\end{tabular}}
}
\end{table}

\subsubsection{\emph{E.}\emph{coli} model iAF1260}
To analyze the applicability of the proposed architecture on other metabolic networks, the model of \emph{E.}\emph{coli} network with 1668 metabolites and 2382 reactions was chosen from~\cite{king2015bigg}. ATP, AMP, ADP, H, Pi, CO$_2$, H$_2$O, COA, NAD, NADP, NADH and NADPH are considered as cofactors. As stated in the literature~\cite{de2009computing}, some simplifications to the system\textquotesingle s model can be performed in order to reduce the complexity of the problem, such as setting currency metabolites like cofactors and energy metabolites to external. According to~\cite{de2009computing}, for energy currency metabolites like ATP, NADH and FADH, since their concentration is assumed to be constant, they are not required to be balanced by an EFM. Notations are taken of the COBRA model from~\cite{king2015bigg}.

The set of shortest EFMs producing L-Lysine calculated by~\cite{de2009computing} was extracted. The observability of 7 shortest EFMs with length 27 and 28 was studied on the proposed architecture. Fig~\ref{fig:ecoliP1} illustrates
the steps of constructing the EFM depicted in the figure. At each step, the set of unbalanced metabolites are specified and the required changes on the reaction set are illustrated. Reaction fluxes are changed so as to achieve the least required steps to create and balance the targeting EFM.

%...........................................
\begin{figure}[!h]
\centering
\includegraphics[scale=0.32]{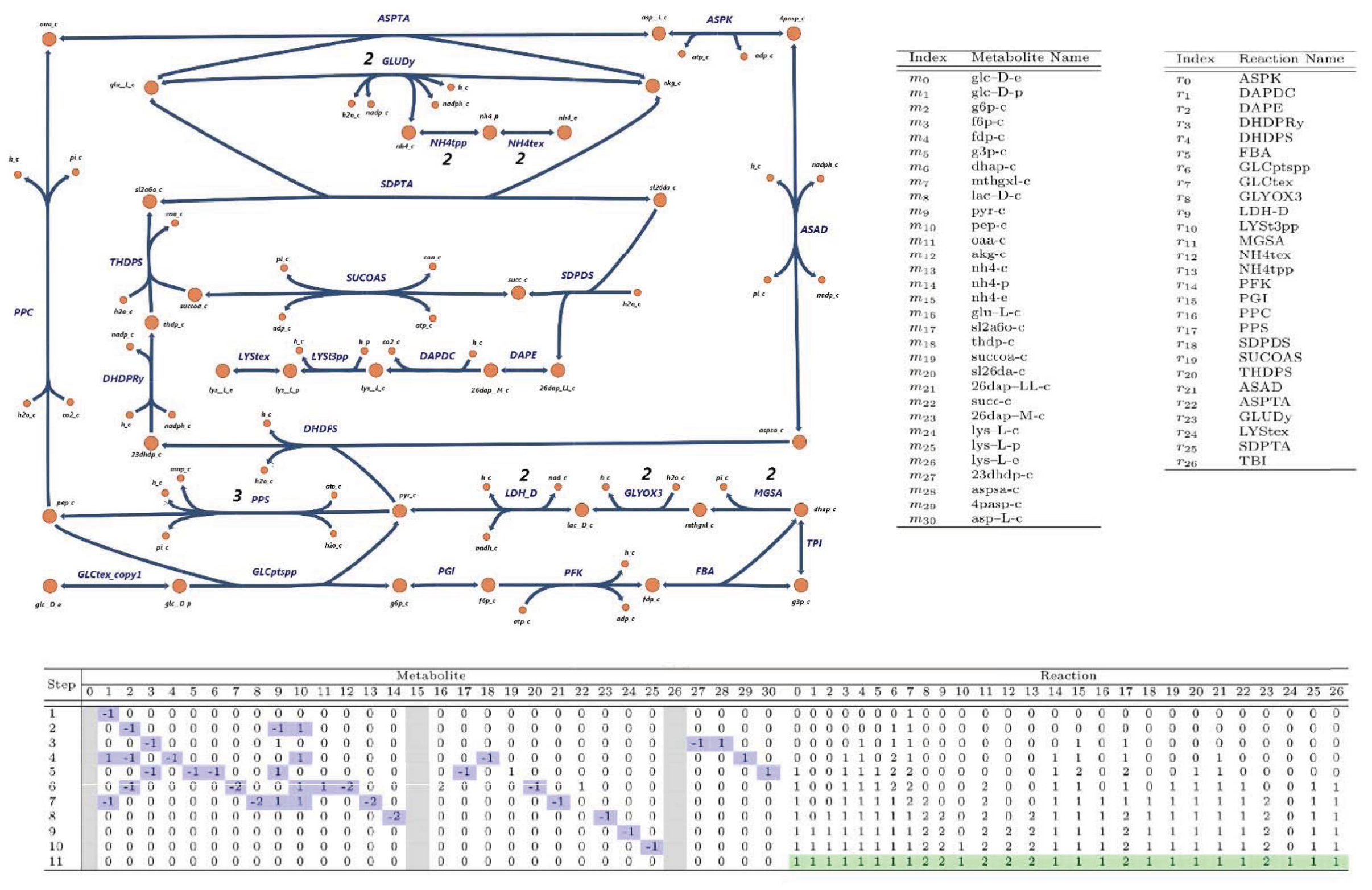}
\caption{{\bf Producing an EFM in \emph{E.}\emph{coli} metabolic network.}
Eleven steps are required to balance EFM fluxes. At each step, the set of unbalanced metabolites are specified by blue cells. The process of these cells can run simultaneously. Flux changes for reactions are shown at each step. The last row states that there is no unbalanced metabolite and the flux set indicates the EFM fluxes in the steady state. The EFM is depicted using Escher~\cite{king2015escher}. Metabolite and reaction notations are taken from the model of~\cite{king2015bigg}. }
\label{fig:ecoliP1}
\end{figure}
%...........................................
%...........................................
%\begin{figure*}[!tpb]
%\centerline{
%\includegraphics[scale=0.18]{files/ecoliP1.png}}
%\caption{.}
%\label{fig:ecoliP1}
%\end{figure*}
%%...........................................

The same information for all 7 paths are provided in Supplementary material File 1 (S1.exe) and the resulting EFMs are depicted in Supplementary material File 2 (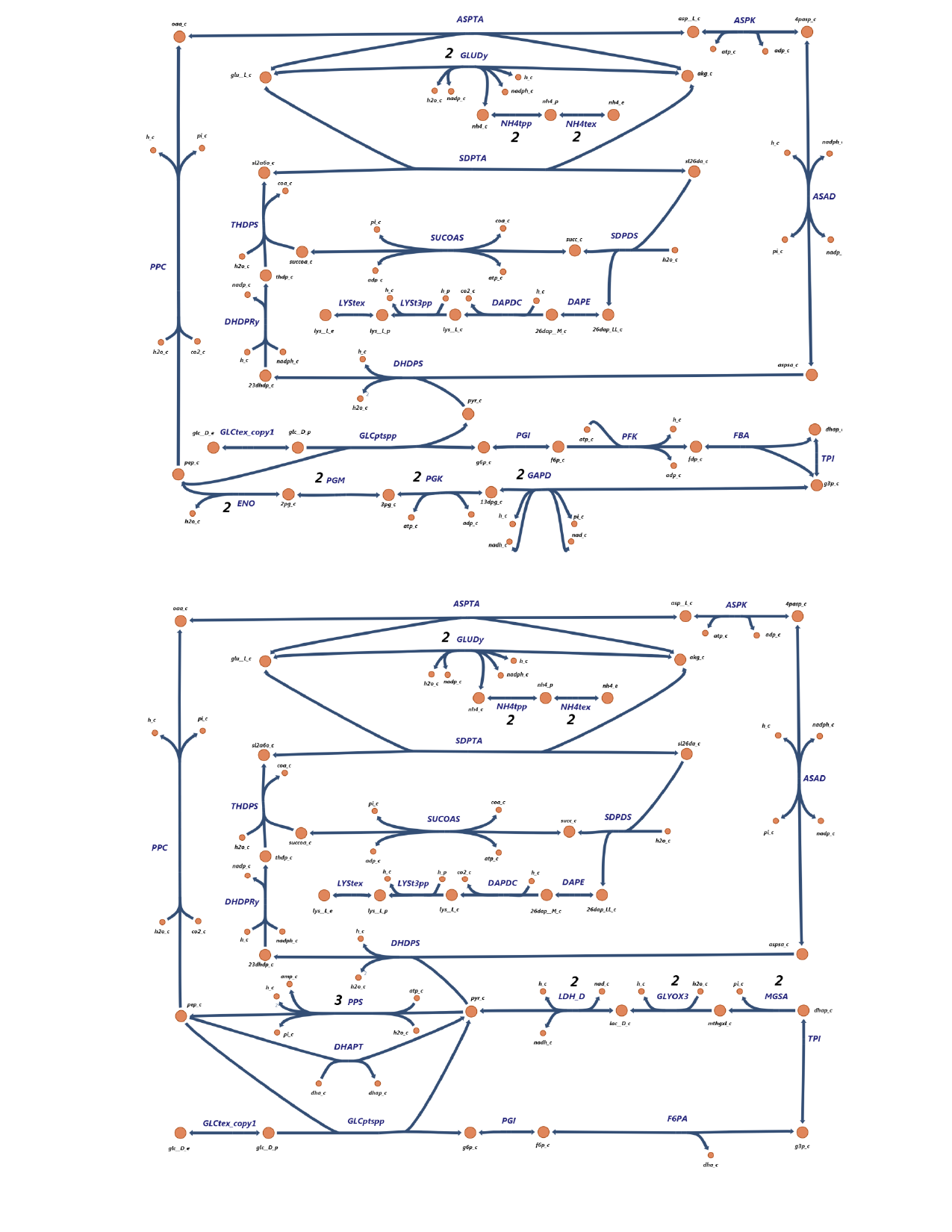). Summary of the results are provided in Fig~\ref{fig:ecoliInf}. As shown in Fig~\ref{fig:ecoliInf}-a, the number of unstable metabolites tends to match a normal curve which means that the number of unbalanced metabolites is increased in the intermediate steps. The box plots are also depicted in Fig~\ref{fig:ecoliInf}-b for each path to show the distribution and the average of the unbalanced metabolites.

%...........................................
\begin{figure}[!h]
\centering
\includegraphics[scale=0.5]{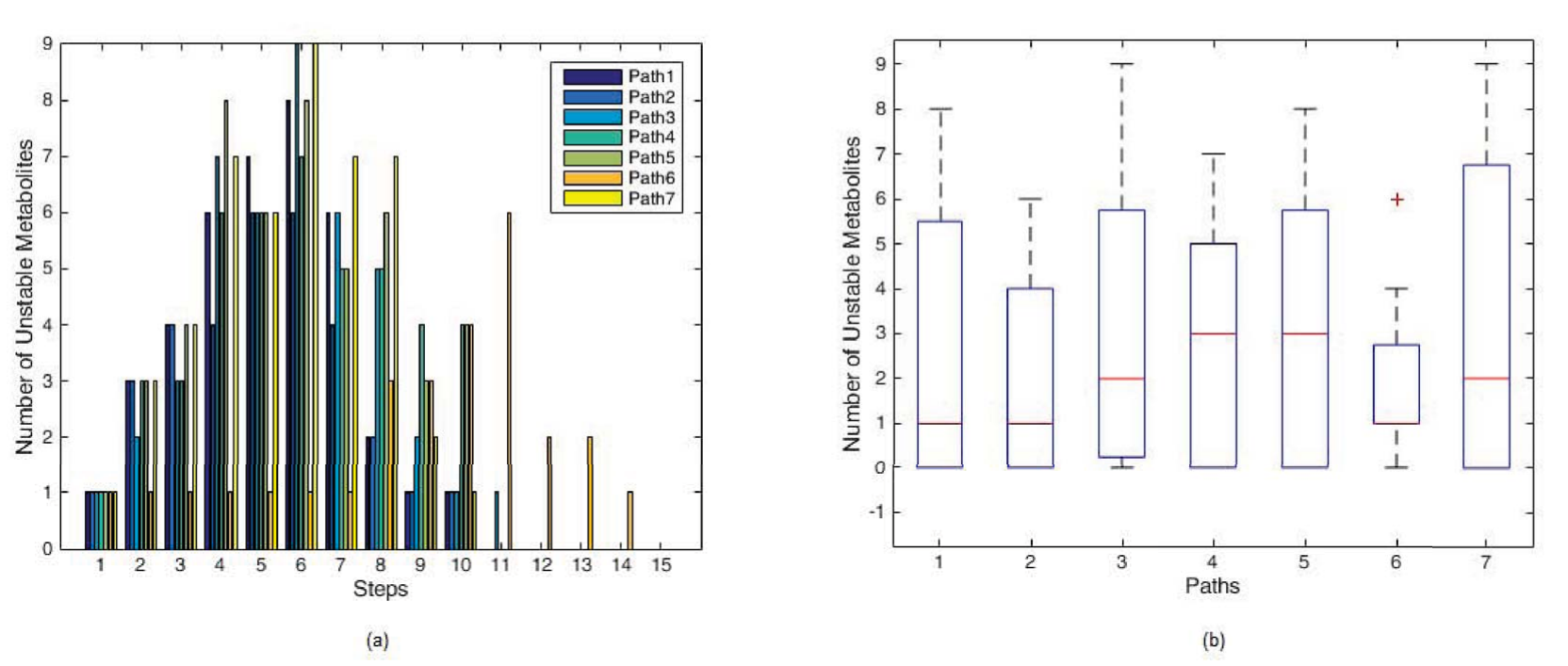}
\caption{{\bf Parallel simulation of shortest EFMs of \emph{E.}\emph{coli} iAF1260.}
(a) The number of unbalanced metabolites at each step for 7 pathways. (b) The distribution of unbalanced metabolites at each step for each pathway.}
\label{fig:ecoliInf}
\end{figure}
%%...........................................
%\begin{figure*}[!tpb]
%\centerline{
%\includegraphics[scale=0.5]{files/ecoliInf.png}}
%\caption{(a). (b).}
%\label{fig:ecoliInf}
%\end{figure*}
%%...........................................

This study shows the potentiality of the proposed architecture for finding EFMs on large-scale metabolic networks. Three sets of variables influence the generation of an EFM, specially on large-scale metabolic networks: (1) a set of first-initialized reactions, (2) the primary input and output selection at first-visited metabolites, (3) the decision on how to pass the flow on the metabolites which are not first-visited. In this test, the external input reaction is initialized and for the rest, the best decisions were taken to pass the flow through one of the reactions of the in-process metabolite.
In this example, an exhaustive search is applied to find the best selection to pass the flow. However, to make the approach automated for large-scale networks, further studies are required.
The order of metabolites and their accessibility to the reaction\textquotesingle s shared data and the decision on how to pass a flow to balance a pathway or report it as an unstable path is important. The convergence of the solution is discussed in Section \emph{Theoretical analysis}. However, a practical implementation to reach the desired solution in a reasonable time can be proposed by further research. An example is provided in Supplementary material File 4 (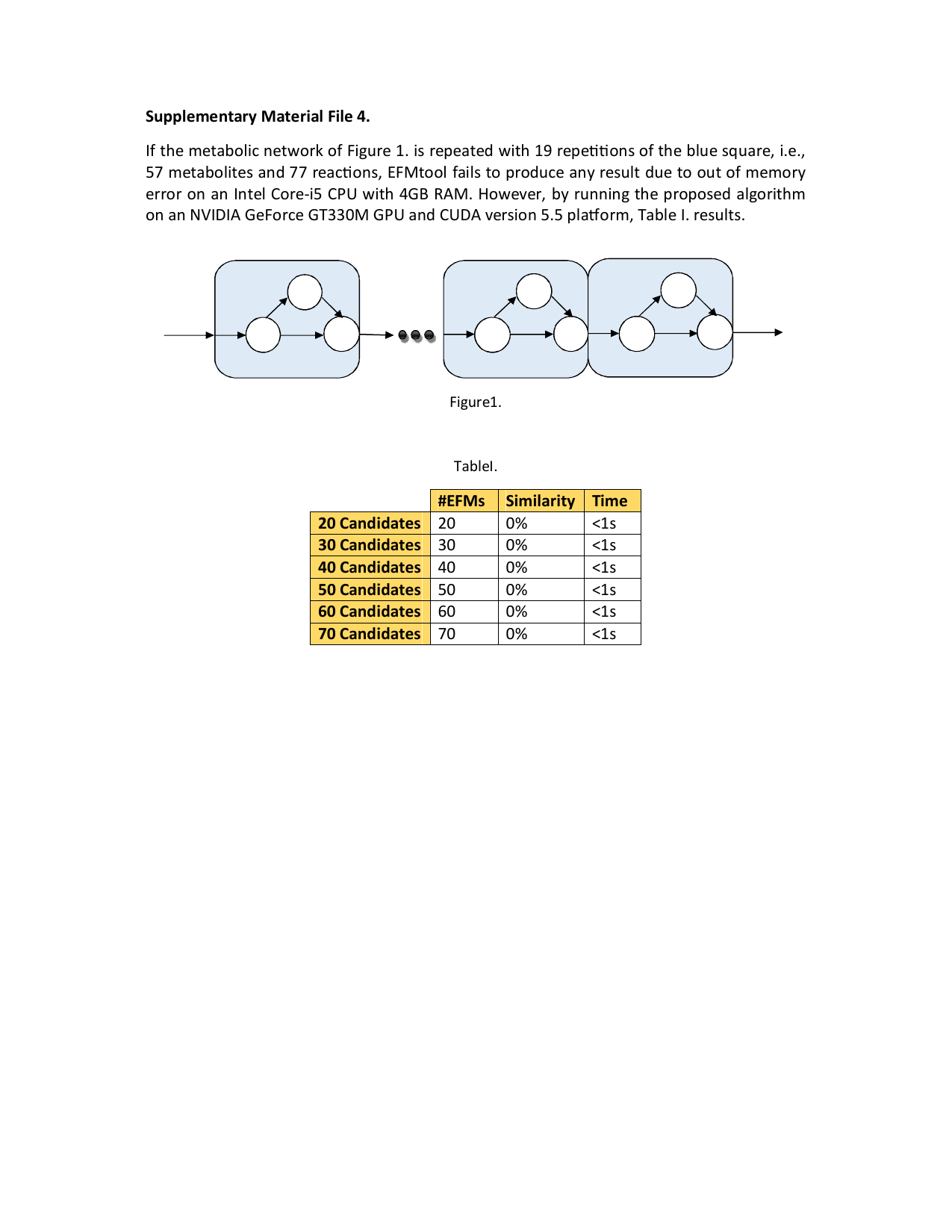) to illustrate the benefit of linear random pathway construction.

%%-----------------------------------------------------------------------------------------------------------------------------------
%%-----------------------------------------------------------------------------------------------------------------------------------

\section{Technical details}\label{sec:TeDe}
%...........................................
\subsection{Design parameters}
In parallel computing, several design parameters should be considered to make the system work efficiently on the hardware platform. Some of these parameters and the way they are treated in the proposed design are discussed.\\

\textbf{Granularity.} The level of details considered in a model or decision making process is considered as \emph{Granularity}. The coarser granularity results in the deeper level of details. Granularity in parallel systems is defined as the size of a part of a system which is selected to work independently.
In the proposed design, the granularity is chosen as the size of a metabolite according to the data stored in each metabolite and the instructions to be executed in each of them. Besides, the function of a metabolite is the smallest repeatable part of the system.

%\textbf{Load Balancing and Locality.} Each part of the system has a timing cost of initializing the process and loading the required data.
%\red{Large unit of works reduces the overhead of parallelism but also reduces the available parallelism as well.}

\textbf{Coordination and Synchronization.}
%\red{The cost of communication between shared data and the cost of synchronizing are as a part of coordination and synchronization subject matter.}
The memory and timing cost of data communication and resolving inconsistencies, which may occur because of data sharing, should be considered in the design.
Uniform shared memory over different blocks and message passing are among data coordination methods.
In the proposed design, the hierarchical memory structure with different sharing levels is used for data communication. To overcome the conflicts, solutions are proposed as discussed earlier.

%levels are considered and shared memory is used as discussed earlier.

%%------------------------------------------------
%...........................................
\subsection{Topology-based parallelism analysis}
In the proposed model, two levels of parallelism are used as explained below.
\subsubsection{Block-level (pathway-level) parallelism}
This level of parallelism is based on the independence of different pathways in the network.
In order words, distinct pathways are created from different outputs of a metabolite. Each new pathway is independent of the others.
In our implementation, each block is dedicated to a pathway which executes independently. The idea is shown in Fig~\ref{fig:parallStru}-a for a metabolite with two reactions.

\subsubsection{Thread-level (metabolite-level) parallelism}
This level of parallelism is based on the node structure. When a reaction takes an updated flux, all connected nodes to that reaction, except the one which was in-process, are triggered. In the proposed implementation, the metabolite-level parallelism is applied by performing a META$x$ function on different threads of a block. The idea is shown in Fig~\ref{fig:parallStru}-b for a reaction with four connected metabolites.

%...........................................
\begin{figure}[!h]
\centering
\includegraphics[scale=0.7]{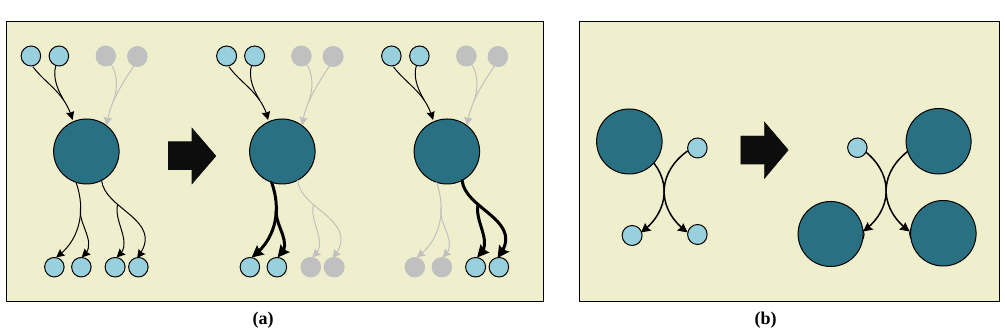}
\caption{{\bf Parallelism levels.}
(a) Block-level parallelism. The two pathways can be traversed independently. (b) Thread-level parallelism. The three metabolites can be processed simultaneously. Big circles are nodes which are getting active concurrently. Thick arrows represent simultaneous active paths.}
\label{fig:parallStru}
\end{figure}
%...........................................
%\begin{figure*}[!tpb]
%\centering
%\includegraphics[scale=0.7]{files/parallStru}
%\caption{(a) Block-level parallelism. The two pathways can be traversed independently. (b) Thread-level parallelism. The three metabolites can be processed simultaneously. Big circles are nodes which are getting active concurrently. Thick arrows represent simultaneous active paths.}
%\label{fig:parallStru}
%\end{figure*}
%...........................................

\subsubsection{Performance modeling}
It is usually hard to model the performance of a parallel system. Modeling the performance of the proposed design is even harder since each thread executes in a while loop either to make a pathway flux-balanced or discard it.
$Z_n$ function in Eq~\ref{eq:6} is defined to model the number of active parallel metabolites (threads) at Step $n$ of the system.

\begin{equation}\label{eq:6}
\begin{array}{l}
Z_1 = 1,\\
Z_n = Z_{n-1} \times P_{rM}(rM_n) \times P_{mR}(mR_n - 1).
\end{array}
\end{equation}

In this equation, $rM_n$ is the number of input or output reactions at each step and $mR_n$ is the number of metabolites involved in a reaction. At step $n$, for each metabolite there are $rM_n$ reactions and each reaction has $mR_n - 1$ triggered metabolites. Going through the steps, the dependencies in the graph structure cause the nodes to be visited more than once and pathways to be duplicated. $P_{rM}$ and $P_{mR}$ functions are used in the model to consider the graph dependencies in terms of pathway and metabolite, respectively. $P_{rM}$ and $P_{mR}$ are probability functions which are estimated to have high probabilities at first and then converge to zero. These functions can be modeled by more experiments on metabolic networks.
A CPU-GPU comparison analysis based on this performance model is provided in Supplementary material File 3 (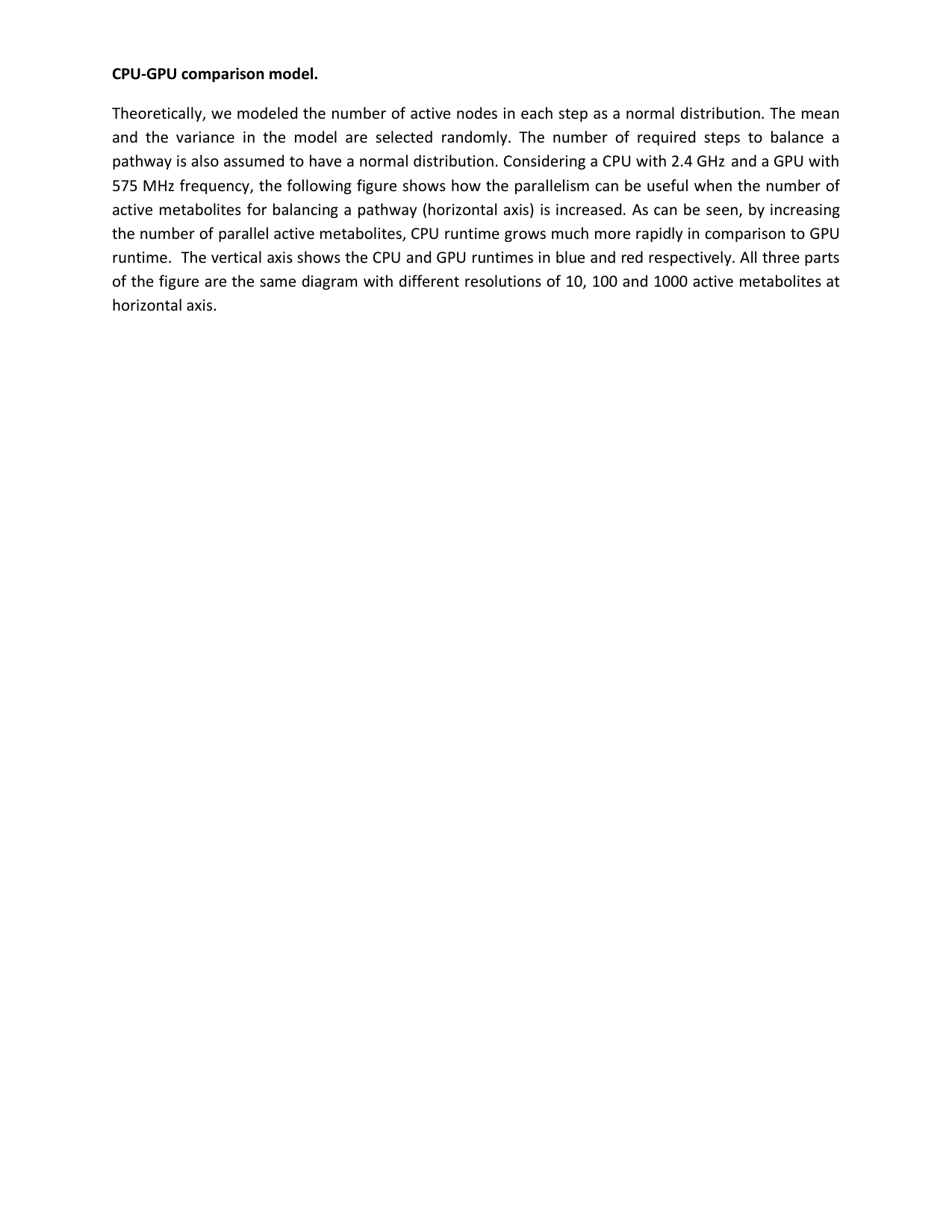).

% with descending orders through steps.
%...........................................
\subsection{Theoretical analysis}
In this section, the proposed system architecture for finding EFMs is discussed theoretically on the graph model.

\begin{prop}
By initializing a selected set of reactions $R_{in}$=$\{r_k|k \in indices\}$ with a default flux value, all minimal pathways which are including $R_{in}$ are explored in the graph $\mathbf{MG}$ of a given metabolic network; the set ``indices'' indicates a subset of reactions in $\mathbf{MG}$.
\end{prop}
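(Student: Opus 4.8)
\noindent
The plan is to recast the claim in the flux-vector language of Definitions~2 and~3 and then prove completeness through the branching structure of the pathway-creation process. Fix an arbitrary minimal pathway, i.e.\ an EFM $\textbf{v}$ with $R_{in} \subseteq \textbf{supp}(\textbf{v})$; by the stated equivalence between the subgraph and flux-cone descriptions of a pathway (Definition~7), $\textbf{supp}(\textbf{v})$ corresponds to a finite hyperpath in $\mathbf{MG}$. It then suffices to establish two facts: (i) some candidate block of the META$x$ procedure constructs a pathway whose activated reaction set is exactly $\textbf{supp}(\textbf{v})$, and (ii) the flux-propagation rules of Eq~\ref{eq:3}--Eq~\ref{eq:5} drive the stored fluxes on that support to the unique balanced ray, namely $\textbf{v}$ up to positive scaling, so that the candidate is eventually tagged \emph{Done}. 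Since $\textbf{v}$ is arbitrary, proving (i)--(ii) for every such $\textbf{v}$ shows that all minimal pathways containing $R_{in}$ are explored.

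\noindent
For (i) I would induct on the steps of pathway creation (Definition~8). The process is seeded by the reactions of $R_{in}$, which trigger their AND-related metabolites. Each time a metabolite is first visited with $EF>0$ (resp.\ $EF<0$) it selects an output (resp.\ input) reaction, and the block-level parallelism of Fig~\ref{fig:parallStru}-a realizes \emph{every} admissible selection in a separate candidate block. Hence the candidate blocks collectively explore the whole tree of hyperpath extensions rooted at $R_{in}$, one branch per first-visit choice. Restricting attention at each branch point to reactions lying in $\textbf{supp}(\textbf{v})$ determines a distinguished root-to-leaf branch; by connectivity of the EFM hyperpath, following it from $R_{in}$ activates precisely the reactions of $\textbf{supp}(\textbf{v})$ and no others. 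With $C$ large enough this branch is realized in some block, giving (i).

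\noindent
For (ii) I would track, in that block, the stored fluxes step by step. Eq~\ref{eq:3} fixes the flux passed out of a newly triggered metabolite, while Eq~\ref{eq:4} and Eq~\ref{eq:5} update fluxes when an already-visited metabolite absorbs additional flux, the two cases being separated by whether the reactions are flux-dependent in the sense of Definition~11 (Fig~\ref{fig:FluxDep}). The lever here is that, once the support is pinned to $\textbf{supp}(\textbf{v})$, the steady-state system obtained by restricting $\textbf{S}\textbf{v}=\textbf{0}$ to that support has a one-dimensional solution space, because an EFM is an extreme ray (Definition~3); thus any flux assignment consistent at every balanced node is forced to be a positive multiple of $\textbf{v}$. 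I would then argue by induction that each application of Eq~\ref{eq:3}--Eq~\ref{eq:5} preserves balance at the already-\emph{Stable} nodes while strictly decreasing the aggregate imbalance $\sum_i |EF_i|$, so that after finitely many steps every node has $EF=0$; by one-dimensionality the resulting vector is $\textbf{v}$ rescaled, and the candidate is reported.

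\noindent
I expect the main obstacle to be the convergence half of (ii). Preserving feasibility under Eq~\ref{eq:4}--Eq~\ref{eq:5} is routine linear algebra, but showing that the aggregate imbalance $\sum_i |EF_i|$ actually reaches zero within the \emph{MAX\_LOOP} bound --- rather than oscillating --- requires a genuine potential-function (or contraction) argument, and this argument must also accommodate the concurrent thread updates and the read/write conflicts on a shared reaction flux described for the META$x$ module. Establishing this quantitative termination-with-correctness on the fixed support, as opposed to the comparatively routine combinatorial enumeration of (i), is where the real work lies; the minimality of $\textbf{v}$ enters precisely here, through the one-dimensionality that pins the limit to the intended ray.
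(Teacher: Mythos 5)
Your part (i) is, in essence, the paper's entire proof of this proposition. The paper justifies it with exactly the recursive triggering argument you give: the reactions of $R_{in}$ trigger their AND-related metabolites, each first-visited metabolite spawns new pathways through every output/input in forward/backward flow (tagged as \emph{primary}), and the recursion continues until no metabolite remains to be triggered, so every pathway containing $R_{in}$ --- in particular every minimal one --- is created in some candidate. The paper adds only the caveat that, since $R_{in}$ is arbitrary, disconnected or non-elementary paths may also be generated and must be discarded afterwards; your device of restricting the branch at each choice point to $\textbf{supp}(\textbf{v})$ handles the minimal pathway in question and is consistent with that caveat.

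The problem is your part (ii). You read ``explored'' as ``constructed \emph{and} driven to the balanced flux vector, hence tagged \emph{Done},'' and you then concede that the decisive step --- a potential-function or contraction argument showing that $\sum_i |EF_i|$ reaches zero within the MAX\_LOOP bound under concurrent updates --- is an obstacle you do not resolve. That is a genuine gap in the proof you set out to give, but it is a gap in a claim the proposition does not make. In the paper, Proposition 1 is purely a traversal/enumeration statement; flux assignment is the content of Propositions 2--3, and convergence to a balanced solution is established only for \emph{type I} and \emph{type II} pathways (Lemma 2), with all remaining pathways (those containing flux-dependent reactions) deferred to an optimization problem (Lemma 3) whose solution the paper does not supply either --- Theorem 1 accordingly claims only that type I and type II EFMs are found. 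So your instinct that the convergence half is ``where the real work lies'' is accurate, but that work is neither required for this proposition nor actually carried out anywhere in the paper; as written, your attempt is incomplete precisely because you folded that unproven (and, for flux-dependent pathways under the MAX\_LOOP discipline, quite possibly false) claim into the statement. Re-scoped to exploration only, your argument (i) stands and coincides with the paper's.
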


A set of output and and-related metabolites of the reactions in the set $R_{in}$ are triggered at each step $s$ by those reactions. From each first-visited metabolite, new pathways are created through outputs/inputs in forward/backward flow getting \emph{primary} tags. Recursively, all pathways including the set $R_{in}$ are created until there are no metabolites to be triggered for each path.

Since an arbitrarily selected set of reactions can be given, disconnected subgraphs might be generated and the given selection may violate the elementarity of the generated path. Therefore, undesired paths should be removed from the results.

%======================================
\begin{prop}\label{prop:step3}
Starting with an initialized set of reactions with certain fluxes, all nodes are traversed and all reactions in a pathway are assigned a flux using Eq.~\ref{eq:2}, as each node has a primary input and a primary output to be used in this equation. Equation~\ref{eq:2} states that the amount of the incoming and outgoing fluxes in $\mathcal{N}_i$ should be equal.
\end{prop}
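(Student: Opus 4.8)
The plan is to argue by induction on the traversal step $n$, maintaining the invariant that after step $n$ every node that has been triggered carries a well-defined primary input and primary output, and that every reaction incident to such a node along the pathway has been assigned a flux. For the base case, the initialized reactions of $R_{in}$ carry the default flux value, and by Proposition~1 the output and and-related metabolites they trigger form the first wave of visited nodes. By Definition~10, the first reaction entering a first-visited node is tagged as its primary input (resp.\ output) and the first reaction directing the flow is tagged as its primary output (resp.\ input), so every node in this wave acquires both a PI and a PO. This is exactly what makes Eq~\ref{eq:2} applicable at the node: both the incoming sum $\sum I_{c_{ij}}I_{f_{ijp}}$ and the outgoing sum $\sum O_{c_{ij}}O_{f_{ijp}}$ range over reactions with defined fluxes.

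For the inductive step I would show that a newly triggered node again receives a PI and a PO by the same tagging rule of Definition~10, so Eq~\ref{eq:2} may be evaluated there to obtain the extra flux $EF$. I then follow the three branches of the control unit: if the node is first-visited, Eq~\ref{eq:3} solves for the flux of the primary output (when $EF>0$) or primary input (when $EF<0$) in terms of the already-known flux, thereby assigning a flux to the previously unassigned primary reaction; if the node is already visited, Eqs~\ref{eq:4} and~\ref{eq:5} redistribute the flux through the primary reactions according to whether the flux-changed reactions are dependent. In every branch the unknown flux appearing on the left-hand side is a primary reaction of the node, whose existence is guaranteed by the invariant, so the propagation never stalls for want of a target reaction. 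Since the pathway is a connected hyperpath (Definition~7) and, by Proposition~1, the triggering process reaches every node of the pathway, each reaction of the pathway is incident to some node that is eventually visited; the induction then yields a flux for every reaction, with the enforcement of $EF=0$ at each node being precisely the steady-state balance asserted in Eq~\ref{eq:2}.

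I expect the main obstacle to be the revisit case. When a node is reached again within a cycle of the hyperpath, one must verify that the redistribution via Eqs~\ref{eq:4}--\ref{eq:5} is well-defined — that a primary reaction is always available to carry the recomputed flux and that the process terminates rather than looping indefinitely, which is the role of the \emph{MAX\_LOOP} guard. A related subtlety is the status of metabolites with only incoming or only outgoing incident reactions: for an internal metabolite this cannot persist in a balanced pathway, so the existence of both a PI and a PO must be argued to hold for exactly the internal nodes to which Eq~\ref{eq:2} is applied, with boundary metabolites excluded from the claim.
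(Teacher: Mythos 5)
The paper never actually proves this proposition: it is stated as an operational description of the architecture, with the entire justification compressed into the clause ``as each node has a primary input and a primary output to be used in this equation,'' together with the informal triggering argument given in prose after Proposition~1. Your induction on traversal steps is therefore a genuinely different --- and substantially more rigorous --- treatment than anything the paper offers. What you add is (i) an explicit invariant (every triggered node carries a PI and a PO, and every incident reaction along the pathway carries a flux), (ii) a correct repair of the statement's loose phrasing: Eq~\ref{eq:2} only enforces the balance $EF=0$, while the actual flux assignments are made by Eq~\ref{eq:3} at first-visited nodes and by Eqs~\ref{eq:4}--\ref{eq:5} at revisited ones, and (iii) an honest identification of where the claim is fragile. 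Both of your flagged obstacles are real and are not resolved by the paper either: termination in the revisit/cycle case is guaranteed only by the MAX\_LOOP cutoff, after which the candidate is discarded, so ``all reactions are assigned a flux'' can only hold for pathways that survive; and the existence of both a PI and a PO at every node is secured by the paper only for internal metabolites of \emph{type I} and \emph{type II} pathways in Lemma~\ref{lemma:cons}, with the remaining cases deferred to the discard rules of Proposition~3. In short, the paper's approach buys brevity --- Propositions~1--3 function as a design specification later consumed by Lemmas~\ref{lemma:cons} and~\ref{lemma:opt} --- whereas yours buys verifiability, at the price of having to actually resolve, rather than merely note, the two difficulties you name; as a proof of the proposition as literally stated (for all pathways, unconditionally), your argument is still incomplete on exactly those two points.
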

%======================================
\begin{prop}\label{prop:step4}
Flux dependencies can prevent the rate of the production/consumption of an internal metabolite in a given topology of a pathway from being zero. Eq.~\ref{eq:3} is used to calculate an update for the consumption/production rate of the node. In the following cases, the pathway is discarded:
\begin{itemize}
  \item $I_{f(n)_{ijp} }=0$ or
  \item $O_{f(n)_{ijp} } = 0$,
  \item a loop is repeated over a node (i.e., getting back to a node from the same reaction multiple times), while trying to find a way out in a subgraph of the pathway.
\end{itemize}
\end{prop}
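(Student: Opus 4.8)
The plan is to show that each of the three listed conditions is exactly the algorithmic signature of a topology on which no nonzero, flux-balanced assignment can coexist with the required support. First I would fix what may be assumed: by Proposition~\ref{prop:step3}, starting from the initialized reactions every node of the candidate pathway is reached and every reaction in its support is assigned a flux, with the node balance governed by Equation~\ref{eq:2}. By Definition~3 a candidate survives as an EFM only if $\textbf{S}.\textbf{v}=\textbf{0}$ holds while every reaction in $\textbf{supp}(\textbf{v})$ carries a strictly nonzero flux; the whole argument therefore reduces to tracking whether the update rules can keep all support reactions nonzero while driving the extra flux $EF$ of Equation~\ref{eq:2} to zero at each node.

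For the first clause I would analyze a revisited node $\mathcal{N}_i$ whose primary reaction is flux-dependent on the entering secondary reaction in the sense of Definition~11 ($R_{PO}\propto R_{SI}$ in forward flow, $R_{PI}\propto R_{SO}$ in backward flow). Writing the balance of Equation~\ref{eq:2} at $\mathcal{N}_i$ and substituting the linear dependency between the two coupled fluxes, I would show that the dependency collapses the two free contributions into a single constrained one, so that forcing the net production or consumption rate to vanish can require a degenerate value on one of the coupled fluxes. Passing that constraint through the update of Equation~\ref{eq:3} then yields either $I_{f(n)_{ijp}}=0$ or $O_{f(n)_{ijp}}=0$; this is the precise mechanism by which a dependency obstructs balancing.

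With that algebra in hand, the two zero-flux discard conditions follow structurally. A reaction whose computed flux is zero leaves $\textbf{supp}(\textbf{v})$, yet it was a required edge of the source-to-sink hyperpath of Definition~7; I would argue that its removal either disconnects that hyperpath or changes the support to a strictly different set, so the candidate cannot be completed into a balanced mode on its own topology and must be discarded. For the loop clause I would show that returning to a node through the same reaction means the iterative balancing has entered a cycle rather than a fixed point of Equation~\ref{eq:2}; since each revisit increments the counter checked against MAX\_LOOP, reaching the bound certifies non-convergence on this topology and again justifies discarding the pathway.

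The main obstacle will be the first clause: turning the informal phrase ``can prevent the rate from being zero'' into a rigorous two-way statement. Concretely, I must prove that genuine infeasibility of the node balance under the dependency always surfaces as one of the zero-flux outcomes of Equation~\ref{eq:3}, or else as a revisit loop within MAX\_LOOP steps, and conversely that hitting any discard condition implies that no nonzero, flux-balanced, minimal-support completion exists. Bridging the static feasibility question, a linear system on the support, with the dynamic stopping criteria of zero flux and loop detection is the delicate step, since it requires ruling out the possibility that a candidate flagged for discard could still be rescued by continued balancing at other nodes.
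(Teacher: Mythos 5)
The paper never actually proves this proposition: unlike Lemmas \ref{lemma:gra}--\ref{lemma:opt} and Theorem 1, which get proof environments, Proposition \ref{prop:step4} (like Proposition \ref{prop:step3}; Proposition 1 at least receives an informal justification paragraph) is stated bare and functions as a \emph{specification} of the algorithm's discard rules --- a stipulation about how the procedure behaves, which the later lemmas and theorem then lean on. So there is no paper proof for your attempt to match, and the question becomes whether your proposed argument is sound on its own terms.

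It is not, and the failure point is exactly the one you flag in your last paragraph. Your plan upgrades the proposition to a biconditional: a candidate is discarded if and only if no nonzero flux-balanced completion exists on its topology. The forward mechanism for the first clause is fine --- the proposition only asserts that dependencies \emph{can} obstruct balancing, so exhibiting how the coupled fluxes collapse into a degenerate value suffices (note the updates with the $I_{f(n)_{ijp}}$/$O_{f(n)_{ijp}}$ notation actually live in Equations \ref{eq:4} and \ref{eq:5}, not Equation \ref{eq:3} as the proposition's text says). But the converse you propose is false as stated. MAX\_LOOP is a finite heuristic cutoff: reaching it certifies nothing about non-existence of a balanced solution, since the same support might balance after more revisits or under a different choice of primary reactions, so a MAX\_LOOP discard can reject a topology that does admit a flux-balanced assignment. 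Likewise, a computed zero flux means this candidate's support has shrunk; the smaller-support mode may itself be a perfectly legitimate EFM (that is precisely how minimality in Definition 3 operates), so discarding is bookkeeping about this particular candidate, not a certificate of infeasibility. The paper is consistent with the discard rules being conservative rather than exact: Theorem 1 claims completeness only for \emph{type I} and \emph{type II} EFMs, and Lemma \ref{lemma:opt} explicitly defers the flux-dependent case to an optimization problem instead of asserting that the discard conditions characterize infeasibility. Your ``delicate step'' of ruling out rescue-by-continued-balancing is therefore not a technical hurdle to be bridged but a claim that cannot be established; the proposition needs no more than the forward mechanism together with the observation that the listed cases are the algorithm's stopping stipulations.
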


%======================================
\begin{lemma}\label{lemma:gra}
If a pathway is EFM, its graph model is reachable by graph traversal of Proposition 1.
\end{lemma}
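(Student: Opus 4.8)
The plan is to argue completeness: every EFM $\mathbf{v}$ corresponds to a hyperpath that the traversal of Proposition~1 can construct, for a suitable choice of the seed set $R_{in}$. First I would fix an EFM $\mathbf{v}$ and pass to its graph model using the equivalence between flux-cone vectors and hyperpaths recorded after Definition~7 (citing~\cite{zevedei2003topological}): the support $\mathbf{supp}(\mathbf{v})$ determines a hyperpath $H$ in $\mathbf{MG}$ whose reactions are exactly the nonzero entries of $\mathbf{v}$ and whose nodes are the internal metabolites they touch. Minimality of the EFM (Definition~3) translates, under this equivalence, into $H$ being a \emph{minimal} hyperpath: no proper sub-hyperpath of $H$ is itself flux-balanced.

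Next I would choose the seed. Take $R_{in}\subseteq\mathbf{supp}(\mathbf{v})$ to be a single reaction of $H$ (a boundary reaction of the EFM when one exists, otherwise any reaction in the support), initialized with a default flux as in Proposition~1. Because $R_{in}\subseteq\mathbf{supp}(\mathbf{v})$, the hyperpath $H$ is one of the minimal pathways containing $R_{in}$, so by Proposition~1 it is among the pathways the traversal explores. It then remains to check that the traversal actually assigns a consistent, nonzero flux to every reaction of $H$ and does not discard it.

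For the flux assignment I would invoke Propositions~2 and~3. Proposition~2 guarantees that, once the seed is fixed, every node of $H$ receives a primary input and a primary output and every reaction is assigned a flux through Equation~\ref{eq:2}, with the balance condition $EF=0$ enforced at each internal metabolite. Since $\mathbf{v}$ is flux-balanced ($\mathbf{S}\cdot\mathbf{v}=\mathbf{0}$), the balance equations are simultaneously satisfiable---$\mathbf{v}$ itself is a witness---so the forward/backward propagation of Equations~\ref{eq:3}--\ref{eq:5} can reproduce $\mathbf{v}$ up to positive scaling. For the discard conditions of Proposition~3, all $v_i\neq0$ on $\mathbf{supp}(\mathbf{v})$, so no required flux is forced to zero; and because a globally balanced solution exists, the flux-dependency adjustments need not loop indefinitely over any node. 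Hence $H$ is neither zeroed out nor loop-discarded, and the traversal returns it, establishing reachability.

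The main obstacle is the nondeterminism of the traversal: it selects primary reactions randomly and makes local decisions on how to pass flow at revisited nodes (Equations~\ref{eq:4}--\ref{eq:5}). The lemma only requires that \emph{some} admissible sequence of these choices reproduces $\mathbf{v}$, not that every sequence does, so the delicate point is to exhibit one such sequence and to show that along it the loop-repetition discard of Proposition~3 is never triggered. This reduces to a convergence argument for the iterative flux propagation toward the balanced vector $\mathbf{v}$, and it is here that the consistency afforded by $\mathbf{S}\cdot\mathbf{v}=\mathbf{0}$ together with the graph/flux-cone equivalence does the real work; the remaining bookkeeping (primary/secondary tagging and matching $R_{in}$ to a starting reaction) is routine.
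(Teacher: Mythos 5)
Your proof of what the lemma actually asserts is correct, and its core coincides with the paper's own argument: both reduce reachability to the exhaustive enumeration of Proposition~1, i.e., every minimal pathway containing a chosen seed reaction $R_{in}\subseteq\mathbf{supp}(\mathbf{v})$ is produced through the different reaction choices of the traversal, and the EFM's hyperpath is one of them. The paper phrases this via ``node-dependent pathways'' and a dichotomy (a flux-balanced pathway either has a flux-balanced subset or is itself an EFM, and in both cases is reachable), whereas you phrase it via the flux-cone/hyperpath equivalence of~\cite{zevedei2003topological} and minimality of the support; these are interchangeable, and your explicit choice of seed and citation of the equivalence is if anything cleaner. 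The significant difference is scope: the lemma claims only that the \emph{graph model} is topologically reachable, and the paper's proof deliberately stops there. Your third and fourth paragraphs --- reproducing the flux vector $\mathbf{v}$ via Equations~\ref{eq:3}--\ref{eq:5}, avoiding the zero-flux and loop-discard conditions, and the convergence of the iterative propagation --- address a strictly stronger statement that the paper defers to Lemma~\ref{lemma:cons}, Lemma~\ref{lemma:opt} and Theorem~1 (and, for pathways with flux-dependent reactions, never fully resolves: Theorem~1 only guarantees \emph{type I} and \emph{type II} EFMs). So the unresolved convergence obstacle you flag at the end is not a gap in this lemma; it is the content of the later results, and importing it here makes your proof appear incomplete when the part that the lemma requires is in fact done.
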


\begin{proof}
An EFM is a flux-balanced pathway with no flux-balanced subset. Node-dependent pathways are referred to the pathways in which all related nodes of a selected reaction are traversed. All node-dependent pathways are produced according to Proposition 1 through different reaction choices. Therefore, either a flux-balanced pathway has a flux-balanced subset (which is node-dependent and reachable by Proposition 1) or it is an EFM itself and reachable by Proposition 1 as well.
\end{proof}
%======================================
%If S is well-conditioned or a condition on the number of related reactions or so what there exist a unique solution underdetermind homogenous what constraint over S to have at least a solution.

\begin{lemma}\label{lemma:cons}
In a consistent network, for two types of pathways there exists a flux-balanced solution. \emph{Type I} pathways are defined as pathways with no hyperarcs; either cycles or pathways from an input to an output with one input and one output for each node. \emph{Type II} pathways are those with nodes with more than one input or more than one output but with no flux-dependent reactions.
\end{lemma}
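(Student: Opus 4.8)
The plan is to recast the statement as the existence of a nonzero flux vector $\mathbf{v}$ satisfying the per-node balance equation (Eq~\ref{eq:2} with $EF=0$) at every internal metabolite of the pathway, which is precisely the restriction of the steady-state system $\mathbf{S}\mathbf{v}=\mathbf{0}$ to the reactions and internal metabolites appearing in the pathway, subject to the sign constraints from irreversibility. For each of the two pathway types I would exhibit such a $\mathbf{v}$ constructively by propagating fluxes with Eq~\ref{eq:3} (as licensed by Proposition~\ref{prop:step3}, which guarantees every node receives a primary input and a primary output), and then use consistency of the network to argue that the propagated values are nonzero and respect the thermodynamic constraint.

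For \emph{Type~I} pathways I would separate the two sub-cases. For a path from a boundary input to a boundary output in which each node has exactly one input and one output reaction, I would initialize the input reaction with a positive flux and apply Eq~\ref{eq:3} node by node: each node balance is a single equation in the single unknown outgoing flux, so it determines the next flux uniquely, and the propagation terminates at the boundary output, where no balance is imposed because external metabolites need not be balanced. This yields a well-defined nonzero $\mathbf{v}$, and consistency of the network keeps every ratio $I_{c_{ij}}/O_{c_{ij}}$ finite and sign-preserving. For a cycle, the same propagation returns to the starting reaction after one full loop; a nonzero solution exists precisely when the product of the stoichiometric ratios around the loop equals one. I would argue that consistency forces this condition, since otherwise every reaction of the cycle would be blocked, so the $k\times k$ balance submatrix is singular with a one-dimensional null space and therefore admits a one-parameter family of nonzero balanced fluxes.

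For \emph{Type~II} pathways the hypothesis ``no flux-dependent reactions'' is the crucial ingredient. By Definition~11, a flux-dependency over a node is exactly what would force a primary reaction and a secondary reaction to share a single balance constraint; its absence means the reactions attached to each node can be oriented so that the reaction-dependency relation is acyclic. I would use this acyclicity to fix a topological ordering of the nodes and discharge the balance equations by forward/backward propagation: at each node the single balance equation is satisfied against a not-yet-fixed incident reaction flux, so no two equations compete for the same unknown and no contradictory constraint arises. The resulting homogeneous system is then consistent and underdetermined (there are more incident reactions than nodes), hence has a nontrivial solution, and consistency of the network again lets me scale and sign it to meet the irreversibility constraint.

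The main obstacle I expect is the \emph{Type~II} case: turning the informal condition ``no flux-dependent reactions'' into a precise combinatorial statement — acyclicity of the reaction-dependency relation over nodes — and proving that this acyclicity really does prevent the per-node balance equations from over-constraining a shared reaction flux, so that a nonzero admissible solution survives rather than collapsing to $\mathbf{0}$. A secondary difficulty is pinning down exactly what ``consistent network'' provides; I would take it to mean that no reaction of the pathway is blocked, so each propagated flux can be chosen nonzero with the correct sign. Making the sign bookkeeping rigorous for reversible reactions, where the forward/backward orientation of the flow (Definition~9) must be tracked consistently through the propagation, is the part most likely to require care.
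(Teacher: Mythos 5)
Your strategy is the same as the paper's: construct the balanced fluxes by propagating values through each node's primary input/output, handling \emph{Type I} by the one-in/one-out balance recursion (the paper cites Eq~(\ref{eq:2}) and Proposition~\ref{prop:step3}) and \emph{Type II} by exploiting the absence of flux dependencies so that no two balance equations compete for the same unknown (the paper cites Eq~(\ref{eq:3}) and Proposition~\ref{prop:step4}). You are in fact more explicit than the paper, which never separates the cycle sub-case of Type I, never states any compatibility condition around a loop, and never formalizes ``no flux-dependent reactions'' as acyclicity of a dependency relation.

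The genuine gap is in your cycle sub-case. You assert that network consistency forces the product of stoichiometric ratios around the loop to equal one, ``since otherwise every reaction of the cycle would be blocked.'' That inference is a non sequitur: blockedness is a property of the full network, and a cycle reaction is unblocked as soon as it carries flux in \emph{some} steady-state distribution, which need not be supported on the cycle. Concretely, take the cycle $r_1: A \rightarrow 2B$, $r_2: B \rightarrow A$ together with reversible exchange reactions for $A$ and $B$: every reaction is unblocked, so the network is consistent in exactly the sense you adopt, yet balance at $A$ forces $v_2 = v_1$ while balance at $B$ forces $2v_1 = v_2$, so the only flux supported on the cycle is zero (the loop product is $2 \neq 1$). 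Your patch therefore does not close the hole; what is actually needed is consistency of the pathway itself viewed as a standalone network (equivalently, the loop-product condition), which is presumably what the lemma's undefined ``consistent'' has to mean for the statement to hold. To be fair, the paper's own proof is silent on this point --- it simply asserts that sequential application of Eq~(\ref{eq:2}) balances all metabolites --- so your extra care exposed a real issue, but your resolution of it fails. A similar, milder hand-wave closes your Type II argument: a nontrivial null vector can be ``scaled and signed'' to satisfy irreversibility only if its signs on the irreversible reactions are uniform, which you assume rather than prove.
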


\begin{proof}
For \emph{type I} pathways, all nodes have one primary input and one primary output. Using Eq.~\ref{eq:2} subsequently, as stated in Proposition 2, fluxes are assigned to all reactions such that consumption/production of metabolites are balanced. For \emph{type II} pathways, using Eq.~\ref{eq:3}, as stated in Proposition 3, for all visited nodes there is a sequence of reactions through primary inputs/outputs to balance production/consumption of metabolites.
\end{proof}

To analyze pathways with flux-dependent nodes, an optimization problem is defined in Lemma~\ref{lemma:opt}.
%======================================

%======================================

\begin{lemma}\label{lemma:opt}
Finding flux values for potentially flux-balanced pathways in the proposed architecture can be defined as an optimization problem.
\end{lemma}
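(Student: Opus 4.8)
The plan is to recast the problem of assigning fluxes on a fixed pathway — specifically one containing flux-dependent nodes, which is exactly the case left open by Lemma~\ref{lemma:cons} — as a constrained optimization over the reaction fluxes restricted to the support of the pathway. First I would fix the decision variables to be the flux values of the reactions lying on the pathway (those carrying the $I_{f_{ijp}}$ and $O_{f_{ijp}}$ entries), together with the normalization that the initialized reaction from Proposition~1 keeps its default value, which also enforces $\mathbf{v}\neq\mathbf{0}$. The balance requirement of Proposition~2 is then written, node by node, as the linear equalities $EF_i=0$ using Equation~\ref{eq:2}, while the thermodynamic constraint from Definition~2 contributes the inequalities $v_j\geq 0$ for each irreversible reaction $j$.

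Next I would encode the flux dependencies of Definition~11. Each flux-dependent pair $R_{PO}\propto R_{SI}$ (forward) or $R_{PI}\propto R_{SO}$ (backward) imposes a linear relation between the two reaction fluxes over the shared node, precisely the relation realized by Equations~\ref{eq:4} and~\ref{eq:5}. Collecting these with the per-node balance equations yields a linear system $A\mathbf{v}=\mathbf{b}$ (with $\mathbf{b}$ determined by the initialization) together with sign constraints, so that finding a flux-balanced pathway is the feasibility question for this system. I would convert feasibility into an optimization problem by introducing a nonnegative imbalance objective: minimize $\sum_i EF_i^{2}$, or equivalently the $\ell_1$ imbalance $\sum_i |EF_i|$ linearized with slack variables to obtain a linear program, subject to the sign constraints and the normalization. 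The pathway is flux-balanced precisely when the optimal value is zero; a strictly positive optimum certifies that no balanced assignment exists on that topology, which is what the word \emph{potentially} in the statement is meant to capture.

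Finally I would argue that the optimizer, when the optimum is zero, is realizable by the architecture. The two quantities PROC$_3$ and PROC$_4$ compute — the independent update of Equation~\ref{eq:4} and the dependent update of Equation~\ref{eq:5} — are exactly the coordinate moves of an iterative solver for $A\mathbf{v}=\mathbf{b}$ along the primary and secondary reactions of each visited node, so a fixed point of the flowchart in Fig~\ref{fig:controlChart} is a stationary point of the objective. I expect the main obstacle to be this last correspondence rather than the formulation itself: the flux-dependent updates can cycle, which is exactly why the MAX\_LOOP guard in Table~\ref{tab:table1} and in Proposition~3 exists, so I must show that any genuine zero-imbalance optimum is attained by a non-looping sequence of primary/secondary selections, and conversely that a cycling trajectory corresponds to a configuration where the feasible region is empty or the normalization forces $EF_i\neq 0$ at some node. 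Making the link between \emph{convergence of the flowchart} and \emph{attainment of the optimum} precise, in the presence of the dependency equalities, is where the real work lies.
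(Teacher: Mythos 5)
Your proposal is correct and rests on the same central idea as the paper's proof --- recast the per-node balance condition of Equation~\ref{eq:2} as an aggregate imbalance to be driven to zero --- but the two formulations differ in ways worth noting. The paper defines $f(i)=\sum I_{c_{ij}}I_{f_{ijp}}-\sum O_{c_{ij}}O_{f_{ijp}}$ and minimizes the raw signed sum $g=\sum_i f(i)$ over the $M$ active metabolites, pushing the real content into the constraints: at each node the outgoing/ingoing reaction must be \emph{selected} so that the imbalance vanishes, i.e., the discrete reaction-selection performed by the META modules is itself part of the optimization problem. You instead fix the pathway topology, take the continuous fluxes on its support as the only decision variables, and minimize $\sum_i EF_i^2$ (or the linearized $\ell_1$ version) subject to linear dependency, sign, and normalization constraints. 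Your version buys a property the paper's lacks: because the paper's objective is a signed sum, positive and negative node imbalances can cancel, so a zero objective value certifies nothing by itself, and the formulation is rescued only by the somewhat circular constraint that each node's imbalance be zero; your nonnegative objective makes ``optimal value zero if and only if flux-balanced'' a genuine equivalence and yields a standard LP/QP feasibility certificate. What the paper's version buys in exchange is that it keeps the combinatorial selection element, so the optimization problem mirrors what the architecture actually does at runtime. Finally, your closing paragraph --- tying fixed points of the flowchart to stationary points of the objective and worrying about MAX\_LOOP cycling --- is work the lemma does not ask for: the statement only claims the problem \emph{can be defined} as an optimization problem, and the paper accordingly stops at the formulation, deferring any claim that the architecture attains the optimum to Theorem~1, where it is simply assumed via the phrase ``a proper solution for the optimization problem of Lemma~\ref{lemma:opt}.'' You could truncate your argument after the second paragraph and the lemma would be fully proved.
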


\begin{proof}
Function $f$ for node $i$ is defined as $f(i)=\sum {I_{c_{ij} }I_{f_{ijp} }  }  - \sum {O_{c_{ij} }O_{f_{ijp} }  }$. The problem of finding flux-balanced pathways, considering that the function of each metabolite $i$ is independent, is defined as follows:\\\\
Minimize:  $g(i)=\sum{f(i)}$,  $i$$=$$1$$:$$M$, in which $M$ is the number of active metabolites in the candidate pathway $p$.\\\\
Constraints: For each node $i$, the outgoing/ingoing output/input reaction should be selected such that $g(i)=0$. The selected reaction carries the flux from the set of active reactions $j$ in $p$ as stated in Propositions 2 and 3.
\end{proof}

%======================================
%======================================
\begin{theorem}
Sampling parallel pathways on the proposed system architecture, leads to exploration of a set of EFMs which includes all \emph{type I} and \emph{type II} EFMs.
\end{theorem}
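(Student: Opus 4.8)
The plan is to assemble Theorem~1 from the three preceding lemmas by treating the claim as the conjunction of two separate guarantees: first, that the \emph{support} (graph structure) of every type~I and type~II EFM is eventually reached by the sampling procedure; and second, that once such a structure is reached, the META$x$ function assigns it a genuine flux-balanced vector. The first guarantee is a reachability statement about Proposition~1, and the second is the existence-of-solution statement of Lemma~2.

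First I would fix an arbitrary type~I or type~II EFM, call it $\mathbf{v}$, and consider its support $\textbf{supp}(\mathbf{v})$ together with the induced subgraph of $\mathbf{MG}$. By Lemma~1, because $\mathbf{v}$ is an EFM its graph model is reachable by the traversal of Proposition~1. Since ``sampling parallel pathways'' is precisely the repeated execution of that traversal over the finitely many initialization sets $R_{in}$ and the finitely many primary-input / primary-output choices at first-visited nodes, the collection of reachable structures is finite and contains the subgraph of $\mathbf{v}$. Hence, given enough EFM candidates (blocks), some sampled pathway has exactly the structure of $\mathbf{v}$. This is the point where the passage from the probabilistic word ``sampling'' to the deterministic word ``all'' must be justified by exhaustiveness over this finite choice space.

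Next I would show that on the reached structure the algorithm produces the correct fluxes. For a type~I structure every node carries a single primary input and a single primary output, so Proposition~2 applies and Equation~(2) propagates a balanced flux to every reaction; for a type~II structure there are multiple inputs or outputs but no flux-dependent reactions, so Proposition~3 and Equation~(3) suffice, and the update loop terminates within \texttt{MAX\_LOOP} without ever invoking the dependency-resolving Equations~(4)--(5). In either case Lemma~2 guarantees that a flux-balanced solution exists and coincides with the one computed, so every metabolite becomes \emph{Stable} and the arbiter tags the candidate \emph{Done}. Because the support was that of an EFM, the reported mode is minimal; non-minimal or disconnected candidates arising from other initializations are discarded by the arbiter, as already noted after Proposition~1. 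Taking the union over all initialization sets and all sampled blocks then yields a reported set of EFMs containing every type~I and type~II EFM.

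The main obstacle I anticipate is exactly the exhaustiveness argument in the reachability step: Lemma~1 only asserts that each EFM structure \emph{can} be reached, whereas the theorem asserts that \emph{all} such structures are reached by sampling. Closing this gap requires observing that the number of distinct structures reachable by Proposition~1 is finite---bounded by the finite choices of $R_{in}$, of primary reactions, and by \texttt{MAX\_LOOP}---so that sufficiently many independent samples cover the entire finite set with certainty. The restriction to type~I and type~II is what keeps the second step clean: for pathways containing flux-dependent reactions Lemma~2 provides no existence guarantee, and the balancing loop may reach \texttt{MAX\_LOOP} and discard the candidate, which is precisely why such EFMs are excluded from the statement.
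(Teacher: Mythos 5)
Your overall decomposition is the same as the paper's: reachability of the EFM's graph structure via Lemma~1 (Proposition~1), followed by the balancing guarantee of Lemma~2 (via Propositions~2 and~3, i.e., Equations~(2) and~(3)) to show that a reached type~I or type~II structure is actually driven to a \emph{Stable}/\emph{Done} state. Your observation that the dependency-resolving machinery (Equations~(4)--(5), Lemma~3) is not needed for these two types, and that this is precisely why the theorem is restricted to them, also matches the paper, which relegates flux-dependent pathways to the optimization problem of Lemma~3 in its closing sentence.

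The genuine gap is in how you close the sampling-to-exhaustiveness step, which you correctly identify as the crux but then resolve incorrectly. You argue that the choice space is finite and conclude that ``sufficiently many independent samples cover the entire finite set with certainty.'' For random sampling (and the META$x$ function does select reactions at random) this is false: no finite number of independent random draws covers a set of two or more elements with probability exactly one; finiteness of the choice space buys you almost-sure coverage only in the limit, or coverage in expectation after a quantified number of draws. The paper supplies exactly this missing quantitative ingredient: it invokes the \emph{coupon collector's} problem, under which collecting all $n$ choices takes $n\sum_{k=1}^{n}\frac{1}{k}\approx n\log n$ draws in expectation, and composes this over a pathway of length $l$ with at most $r_{max}$ choices per metabolite to conclude that roughly $(r_{max}\log r_{max})^{l}$ sampled blocks suffice to create all possible routes. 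In other words, the theorem's guarantee is an expectation-level (probabilistic) one, not a deterministic one, and any correct proof must either state it that way or replace random sampling by deterministic enumeration (as the DEPTH$y$ kernel does with its static thread-ID path generator, but META$x$ does not). Your proof as written asserts a certainty that the sampling process cannot deliver.
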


\begin{proof}
Based on the result of \emph{coupon collector\textquotesingle s} problem in probability theory, the expected number of picks required to choose all the elements of a set is $n\sum\limits_{k = 1}^n {\frac{1}{k}}$ which for large $n$ is approximately $n\log n$. Consider the length of a pathway as the number of its contributed metabolites. For a pathway with length $l$, for each metabolite, $r_{max}$ selections are possible, in which $r_{max}$ is the maximum size of output/input array in the forward/backward flow. Therefore, ${(r_{max}\log r_{max})}^l$ efforts for a pathway with length $l$ are required to create all possible routes. Number of efforts reflects number of parallel pathways, considered as distinct blocks on the GPU platform. The result of Lemma~\ref{lemma:gra} is used to show that by selecting all outputs/inputs (in forward/backward flow), all pathways are produced. The results of Lemma~\ref{lemma:cons} and Lemma~\ref{lemma:opt} are used to show that the function running on each GPU thread keeps metabolite production/consumption rates balanced and leads to the construction of \emph{type I} and \emph{type II} flux-balanced pathways. With a proper solution for the optimization problem of Lemma~\ref{lemma:opt}, a balanced solution for other pathways can be explored.
\end{proof}
%======================================
\section{Discussion}\label{sec:dis}
The proposed model traverses the graph and keeps both the stoichiometry information and the minimality of the path with the opportunity of not exploring the whole solution space. Our approach tries to decide if a certain ``path'' is EFM or not, merely based on topology rules (instead of recognizing the elementarity through rank test or comparing new candidates with produced ones). The graph structure makes a good track of unbalanced metabolites.
Each metabolite is considered as an object and the graph structure makes it easier to set-up rules for paths and explore the intended solution space via rules; e.g., initializing a set of reactions to find their including EFMs.

%A hypergraph is a simplified type of the modified AND/OR graph. In the introduced model, each arc has two different input and output coefficients. Besides, for each arc, a dynamic label for each pathway is defined, that is, the flux of the reaction represented by that arc.
%The complication arises since fluxes of contributing input and output reactions over a node may be related. This results to the non-linear property of fluxes. The non-linearity has been considered in our flux calculation procedure (see Eq. 2 and Eq. 3).

The modified AND/OR graph is a simplified type of a hypergraph and the application of hypergraphs in biological networks has been reviewed in~\cite{klamt2009hypergraphs}. In the proposed model, each arc has two different input and output coefficients. Besides, for each arc, a dynamic label for each pathway is defined, that is, the flux of the reaction represented by that arc~\cite{arabzadeh2017graph}. In this context, the edges are directed and each has been associated with a weight. However, the weight of an output arc of a node may be different from the weight of the arc entering another node as an input. The reason is that the production rate of a metabolite in one reaction can be different from its consumption rate in another reaction. It has been studied that the hypergraph structure complicates pathway topology~\cite{marashi2014mathematical}. The complication arises since fluxes of contributing input and output reactions over a node may be related. This results to the non-linear property of fluxes. The non-linearity has been considered in our flux calculation procedure (see Eq. 2 and Eq. 3). Besides, since in this model, each metabolite is considered as a subsystem, the linear or non-linear relations of reactions over a metabolite, i.e. a subsystem, are of interest as defined in Definition 11. Other definitions such as enzyme subsets as defined in~\cite{pfeiffer1999metatool} are not directly related to our definitions and Definition 11 can be considered as a subset of enzyme subsets. Enzyme subsets are defined in~\cite{pfeiffer1999metatool} as groups of enzymes that operate together in fixed flux proportions in all steady states of the system over a whole pathway.

In comparison to the method of~\cite{de2009computing}, both methods are exploring the non-steady-state solution space and try to apply additional constraints to reduce the complexity. However, in this approach instead of using the support of LP-based tools, we used an inherent parallel structure of the network and developed a model on hardware. However, while the basic framework is sketched and the potential of the method on pathways belongs to large networks is shown, further studied are required to automatize the approach to get desired output EFMs.

The first introduced double-description method for finding EFMs explores the whole set of pathways to find pathways that are in the steady-state solution space and then selects EFMs by comparing the reaction subset of pathways~\cite{schuster1994elementary} as used in several tools such as COPASI~\cite{hoops2006copasi}. In the improved double-description versions of the method which use null-space of the stoichiometry matrix as an input, different combinations of pathways in the null-space are calculated and then the reaction subset of pathways are compared~\cite{wagner2004nullspace,urbanczik2005improved,quek2014depth}. In~\cite{arabzadeh2017graph}, GB-EFM method first calculates the reaction-dependent pathways in the solution space and then checks to see if these pathways can be in the steady-state by using some rules on the topology of the pathway. In the proposed model the procedure of constructing the paths and balancing them are combined together to introduce a hardware independent core so as to combine the results of all these cores with the same decision table and distributed data on the network to calculate EFMs.

There are two different categories of approaches for constraint-based analysis, namely FBA and EFM analysis. Both approaches use the concept of ``balanced fluxes''. However, in FBA, LP is applied to find fluxes in metabolic networks with an optimized objective function. In contrast, in our approach, which is in the category of EFM analysis approaches, we use the concept of ``balanced fluxes'' in a graph representation of a metabolic network to find flux-balanced pathways. To be more precise, we did not use the LP-based methods like flux balance analysis (FBA) as used in~\cite{riemer2013metano} and~\cite{chan2018accelerating}. In contrast, we use an AND/OR graph to find pathways with specific properties.

Using acceleration methods to speed-up the calculation of computational biology algorithms has been a challenge for years~\cite{chan2018accelerating}. However, in our proposed method we aimed to introduce a system modeling rather than just an acceleration technique.

Several models have been introduced so far for the analysis of metabolic networks including convex analysis~\cite{pfeiffer1999metatool} and topological analysis~\cite{zevedei2003topological}. While all these approaches persue the same goal, there are important differences in the way they can be interpreted and be useful for a specific application.
Considering a system as a collection of subsystems has been modeled using cellular automata and were used in several applications~\cite{silva2003cellular,wishart2005dynamic}. The relation of graph models and cellular automata has been studied before~\cite{martinez2018simple}. However, to the best of our knowledge, in constraint-based approaches, the system view of a metabolic network, that is considering a metabolic network as a set of subsystems to analyze the network, has not been used before. To avoid the complexity of such modeling as cellular automata and because the main goal of the introduced method was to propose a hardware parallel model, our method is proposed in terms of graph objects and the relationship between subsystems defined based on flux calculation. However, the correspondence of the proposed method with cellular automata and using GPU in the context of topology-based parallelism to implement cellular automata (which leads to an acceleration tool to implement it) can be considered as a future work.

For computing minimal pathways of metabolic networks, some approaches, like MinSpan~\cite{bordbar2014minimal}, try to find the sparsest linear basis of the null space of the stoichiometric matrix, \textbf{S}. However, finding only the shortest pathways (as in~\cite{de2009computing}) may introduce a bias in computing EFMs, since in large-scale networks the majority of EFMs are known to be the large ones~\cite{machado2012random}. Furthermore, approaches which consider computing the shortest EFMs typically formulate their problem as a mixed-integer linear program (MILP), which is known to be NP-hard. More specifically, these algorithms find EFMs one by one, which means that after finding an EFM by solving an MILP, a new ``not-equal'' constraint should be added to the set of constraints in order to avoid finding the previous EFMs. Such constraints often make the latter MILP problem harder to solve~\cite{de2009computing}. In contrast, our approach is unbiased, in the sense that all EFMs, including the long ones, have the same chance to be found. Moreover, our approach applies parallelism for computing EFMs. Therefore, the EFMs that are found are independent of each other.

The proposed approach relies on availability of an array of hardware resources to achieve its goal. Taking advantage of the inherent structural parallelism of the hardware resources, we attempt to find as many EFMs as possible given the available hardware resources, i.e., the more parallel resources are available, the more EFMs may be found. Therefore, the user is not ``directly'' setting a limit on the number of EFMs. However, their choice of hardware resources will ``indirectly'' determine that.
Since the hardware platform features a parallel structure, the time complexity of the proposed approach is very low. Needless to say, in the latter case, expanding hardware resources would resolve the impasse and provide a set of EFMs. We believe that overcoming this obstacle and the related analysis can be the topic of another research.

%%-----------------------------------------------------------------------------------------------------------------------------------
%%-----------------------------------------------------------------------------------------------------------------------------------
\section{Conclusion}\label{sec:conc}

In this paper, a modular system architecture was proposed to calculate minimal flux-balanced metabolic pathways. The architecture is based on the AND/OR graph model. Each META$x$ module was designed in order to emulate the internal function of a metabolite for finding EFMs.
The proposed architecture was implemented on a GPU platform to take advantage of the parallel architecture provided in the GPU based on multiple cores and hierarchical memory. The memory levels of the GPU are used to illustrate the memory hierarchy in the system. The topology-based parallelism obtained by the system was the main achievement of the model.
Additionally, the simplified metabolic network of the CHO cell was studied to prove the concept of the design on metabolic networks to find EFMs. Besides, the potential of the model was studied on shortest pathways of the \emph{E.}\emph{coli} model.

Studying genome-scale models and finding biologically meaningful pathways~\cite{hadadi2017reconstruction,huang2017method} by setting rules in the module\textquotesingle s function are considered as our future research. In this paper, the static structure of the GPU was used. Using dynamic thread activation, as provided in recent GPU architectures, and store accurate pathway information according to the available memory space are the ideas to make the model appropriate for genome-scale analysis.
In addition, core function decisions can be improved by randomized decisions while keeping a global cost function to manage the moves, which is considered to investigate in our future research.
Besides, partitioning and compression preprocessing methods can be used further to overcome the limitation of the hardware platforms.

%%-----------------------------------------------------------------------------------------------------------------------------------
%%-----------------------------------------------------------------------------------------------------------------------------------
%%==================================================================================================================================================
%%==================================================================================================================================================
%%==================================================================================================================================================
%%==================================================================================================================================================
%%==================================================================================================================================================
\section*{Supporting information}

% Include only the SI item label in the paragraph heading. Use the \nameref{label} command to cite SI items in the text.
\paragraph*{S1 excel file.}
\label{S1_File}
{\bf Applying the proposed model on \emph{E.coli} shortest paths.} Complementary data are provided in the excel file.

\paragraph*{S2 pdf file.}
\label{S2_File.}
{\bf Applying the proposed model on \emph{E.coli} shortest paths.} Complementary figures are provided in the pdf file.

\paragraph*{S3 pdf file.}
\label{S3_File.}
{\bf CPU-GPU comparison model.} Complementary figure is provided in the pdf file.

\paragraph*{S4 pdf file.}
\label{S4_File.}
{\bf Benefit of linear random pathway construction.} Complementary data is provided in the pdf file.

\section*{Acknowledgments}
Authors would like to thank Dr. Nathan Lewis (UCSD) for helpful discussions on the concept.

\nolinenumbers

% Either type in your references using
% \begin{thebibliography}{}
% \bibitem{}
% Text
% \end{thebibliography}
%
% or
%
% Compile your BiBTeX database using our plos2015.bst
% style file and paste the contents of your .bbl file
% here. See http://journals.plos.org/plosone/s/latex for
% step-by-step instructions.
%
%\bibliography{biblio}

\end{document}